\newif\ifPreprint
\newif\ifIsIEEE
\newtheoremstyle{thmstyle}{1.5ex}{-1ex}{\itshape}{}{\bfseries}{.}{0.5em}{}
\theoremstyle{thmstyle}
\newtheorem{theorem}{Theorem}
\newtheorem{lemma}[theorem]{Lemma}
\newtheorem{proposition}[theorem]{Proposition}
\newtheoremstyle{defstyle}{1.5ex}{-1ex}{}{}{\bfseries}{.}{0.5em}{}
\theoremstyle{definition}
\newcommand\ifnull[3]{%
  \ifx\null#1%
    #2%
  \else%
    #3%
  \fi}
\newif\ifShellEscape
\newcommand\CheckShellEscape{%
  \def\tmpfile{/tmp/w18-test-\the\year\the\month\the\day\the\time}%
  \immediate\write18{touch \tmpfile}%
  \IfFileExists{\tmpfile}{\ShellEscapetrue}{\ShellEscapefalse}%
  \immediate\write18{rm -f \tmpfile}%
  \renewcommand\CheckShellEscape{\relax} 
}
\newcommand{\executeiffilenewer}[3]{%
  \ifnum\pdfstrcmp{\pdffilemoddate{#1}}{\pdffilemoddate{#2}}>0%
    \CheckShellEscape\ifShellEscape%
      {\immediate\write18{#3}}%
    \else%
      \errmessage{ERROR: Importing from Inkscape with inputsvg: When using
      the 'inputsvg' command, you should enable the --shell-escape flag in
      pdflatex}%
    \fi%
  \fi%

}
\newcommand{%
  \executeiffilenewer{.svg}{.pdf}%
    {inkscape -z -D --file=.svg --export-pdf=.pdf --export-latex}%
  \input{.pdf_tex}%
}[1]{%
  \executeiffilenewer{#1.svg}{#1.pdf}%
    {inkscape -z -D --file=#1.svg --export-pdf=#1.pdf --export-latex}%
  \input{#1.pdf_tex}%
}
\newcommand{\Remark}{\textbf{Remark:\ }}
\newcommand{\RemarkEnd}{%
  \ifIsIEEE%
    \hfill{\IEEEQEDclosed\hspace*{0.05em}}%
  \else%
    \hfill{$\blacksquare$\hspace*{0.05em}}%
  \fi%
}
\definecolor{darkgreen}{rgb}{0,0.4,0}
\definecolor{orange}{rgb}{1,0.5,0}
\newcommand\half{\tfrac 1 2}
\newcommand\defeq{\triangleq} 
\newcommand\modop{\ {\rm mod}\ } 
\newcommand{\mo}{{-1}} 
\newcommand{\ti}[1]{^{(#1)}}
\newcommand\T{^\star}
\newcommand\pos[1]{{\backslash #1 /}}
\newcommand{\rev}{\overline}
\newcommand\mtrx[1]{\begin{pmatrix}#1\end{pmatrix}}
\newcommand\F{\mathbb F\xspace} 
\newcommand\FF[1]{\mathbb F_{#1}\xspace} 
\newcommand\ratex[3][x]{\tfrac{#2(#1)}{#3(#1)}}
\newcommand\RR{\mathbb R\xspace}
\newcommand\ZZ{\mathbb Z\xspace}
\newcommand{\params}[3]{[#1,\ #2,\ #3]}
\algrenewcommand\algorithmicrequire{\textbf{Input:}}
\algrenewcommand\algorithmicensure{\textbf{Output:}}
\newcommand{\Fail}{\ensuremath{\mathsf{Fail}}\xspace}
\renewcommand{\deg}{\Delta}
\newcommand{\LTord}[2]{\textnormal{\footnotesize LT}_{#1}(#2)}
\newcommand{\xdeg}{\deg^x}
\newcommand{\ydeg}{\deg^y}
\newcommand{\wdeg}[1]{\deg_{(#1)}}
\renewcommand\pos[1]{{\mathrm{pos}(#1)}}
\def\eps{\varepsilon}
\newcommand{\errs}{\epsilon}
\newcommand{\C}{\mathcal C}
\renewcommand{\Remark}{\textbf{Remark:\ }}
\renewcommand{\RemarkEnd}{%
  \ifIsIEEE%
    \hfill{\IEEEQEDclosed\hspace*{0.05em}}%
  \else%
    \hfill{$\blacksquare$\hspace*{0.05em}}%
  \fi%
}
\begin{document}

\title{On Rational-Interpolation Based List-Decoding and List-Decoding Binary Goppa Codes}

\author{Peter~Beelen,
        Tom~H\o{}holdt,~\IEEEmembership{Fellow,~IEEE,}
        Johan~S.~R.~Nielsen,
        and~Yingquan~Wu,~\IEEEmembership{Senior Member,~IEEE}
        \thanks{%
          \ifPreprint%
            Preprint.
            This paper is published in IEEE Transactions of Information Theory with DOI: 10.1109\slash TIT.2013.2243800.

          \fi%
          The authors gratefully acknowledge the support from the Danish National Research Foundation and the National Science Foundation of China (Grant No.11061130539) for the Danish-Chinese Center for Applications of Algebraic Geometry in Coding Theory and Cryptography.

          P.~Beelen, T.~H\o{}holdt, J.~S.~R.~Nielsen are with the Department of Applied Mathematics and Computer Science, Technical University of Denmark (e-mail: pabe@dtu.dk; tomh@dtu.dk; jsrn@jsrn.dk).
          Y.~Wu is with Sandforce Inc., Milpitas, CA USA (e-mail: yingquan\_wu@yahoo.com).
        }
}

\maketitle

\begin{abstract}
We derive the Wu list-decoding algorithm for Generalised Reed-Solomon (GRS)
codes
by using Gr\"obner bases over modules and the Euclidean algorithm (EA) as the
initial algorithm instead of the Berlekamp-Massey algorithm (BMA). We present a
novel method for constructing the interpolation polynomial fast. We give a new
application of the Wu list decoder by decoding irreducible binary Goppa codes
up to the binary Johnson radius. Finally, we point out a connection between the
governing equations of the Wu algorithm and the Guruswami-Sudan algorithm
(GSA), immediately leading to equality in the decoding range and a duality in
the choice of parameters needed for decoding, both in the case of GRS codes and
in the case of Goppa codes.
%
\end{abstract}

\begin{IEEEkeywords}
list decoding, rational interpolation, list size, Reed-Solomon code, Goppa
code, Johnson radius
\end{IEEEkeywords}

\section{Introduction}

\IEEEPARstart{I}{n} \cite{wu08}, Wu presented a decoding algorithm for
Generalised Reed-Solomon
(GRS) codes which decodes beyond half the minimum distance. Just like the
Guruswami-Sudan algorithm (GSA) \cite{guruSudan99}, the decoder might return a
list
of candidate codewords, justifying the term \emph{list decoder}. The two
algorithms share many other properties, most notably the decoding radius: they
can both decode an $\params n k {n-k+1}$ GRS code up to $n - \sqrt{n(k-1)}$; the
so-called Johnson radius.

The Wu list decoder reuses the output of the Berlekamp-Massey algorithm (BMA).
The BMA has long been used for solving the Key Equation of GRS codes
\cite{berlekamp} whenever the number of errors is less than half the minimum
distance. Wu noted that the result of the BMA still reveals crucial information
about solutions to the Key Equation when more errors have occurred, and used
this for setting up a rational interpolation problem. This problem can be
solved by a generalisation of the core of the GSA, which
solves a similar problem for polynomials.

The equivalence of the BMA and a special utilisation of the extended Euclidean
algorithm (EA) is well-studied, e.g.~ \cite{fitzpatrick95, dornstetter87,
heydtmann00}. Inspired especially by Fitzpatrick \cite{fitzpatrick95}, we
recast the Key Equation and the first part of the Wu list decoder into the
language of Gr\"obner bases over certain modules, making it possible to use the
EA; a generally more flexible and algebraic approach than the BMA.

The rational interpolation problem is attacked by first constructing an
interpolation polynomial. This can be done by solving a large linear system of
equations, but that is prohibitively slow. We give a fast method for
constructing the interpolation polynomial which has the same asymptotic
complexity as the fastest known methods for polynomial interpolation as used in
the GSA. This also renders the Wu list decoder as fast as the fastest variants
of the GSA.

The decoding radius and the choice of auxiliary parameters in the Wu list
decoder is governed by having to satisfy a certain inequality, just as in the
GSA; we point out that in the case of decoding GRS codes, the inequality in the
Wu list decoder \emph{becomes} the governing inequality by a simple change of
variables, immediately implying that they have the same decoding radius and
always use the same list size.

We show how the Wu list decoder can be adapted to decode binary Goppa codes.
The algorithm is a continuation of the Patterson decoder \cite{patterson75},
and the adaption of the Wu list decoder to this case is particularly simple due
to the use of the EA instead of the BMA. Similarly to the case of GRS codes, we
point out a connection between the governing inequality of the decoding
parameters and the equation for the GSA with the K\"otter-Vardy multiplicity
assignment method (GSA+KV). This immediately yields that the methods have the
same decoding radii, namely up to the binary Johnson radius $\half n - \half
\sqrt{n(n-2d)}$, where $n$ is the length and $d$ the designed minimum distance
of the Goppa code. Using our fast interpolation method, also this algorithm is
as fast or faster than the previously known algorithms with the same decoding
radius. 

\subsection{Related Work}

The Wu list decoder is fairly recent and not much work has been done on it yet.
In both Trifonov \cite{trifonov12, trifonovWu10} and Ali and Kuijper
\cite{ali11}, an algorithm very closely related to the Wu list decoder for
GRS codes is reached using a Gr\"obner basis description. The algorithm,
however, revolves around two polynomials $G(x)$ and $R(x)$, where $G(x)$ is
defined as the polynomial vanishing at the evaluation points of the code and
$R(x)$ is the Langrange polynomial through the received word coordinates at the
evaluation points. These polynomials are of higher degree than those used by
the original Wu list decoder: the syndrome polynomial and a ``modulus''
$x^{n-k}$. More importantly, they are quite specific to the setting of decoding
GRS codes.

We take a slightly different approach, closer to the original one by Wu. We
essentially show how rational interpolation can help in solving Key
Equations; that is, equations of the form
\[
  \gamma(x)q(x) \equiv \delta(x) \mod p(x)
\]
where $p, q$ are known polynomials, and one seeks $\gamma$ and $\delta$ of low
degree while additionally having certain knowledge on the evaluations of
$\gamma$ and $\delta$. In the special case of GRS codes, this is exactly
what the Wu list decoder does, but our description also immediately makes it
clear that this can be used for binary Goppa codes.

The construction of the interpolation polynomial in the GSA is one of the most
computationally expensive parts of that algorithm. A fast method for this is by
Beelen and Brander \cite{beelenBrander} which refines one by Lee and O'Sullivan
\cite{leeOSullivan08}; the main gain comes from solving the core
polynomial-matrix problem using a faster method by Alekhnovich
\cite{alekhnovich05}. There is an
even faster method for this matrix problem by Giorgi et al. \cite{giorgi03},
and using this in \cite{beelenBrander} yields the fastest known way of
constructing the interpolation polynomial.
Bernstein uses essentially the same approach for his GSA variant and achieves
the same speed \cite{bernstein11simplified}, both for Reed-Solomon codes and
alternant codes; see also below. We show how this approach can be extended for
rational interpolation, which ultimately leads to the Wu list decoder having
the same asymptotic complexity as the GSA.

Binary Goppa codes have long been known to have much better minimum distances
than their underlying GRS codes: if constructed with Goppa polynomial of
degree $t$, the minimum distance is at least $2t + 1$, while it's GRS code has
minimum distance $t+1$, see e.g.~\cite{MacWilliamsSloane}. Patterson's classic
decoding algorithm utilises the binary property to decode $t$ errors
\cite{patterson75}, but recent advances in list decoding allows decoding up to
the binary Johnson radius $J_2 = \half n - \half\sqrt{n(n-4t -
2)} > t$, where $n$ is the length of the code.

Simply list decoding the underlying GRS code only reaches $n - \sqrt{n(n-t-1)}
< t$, so this is not sufficient. However, by considering the Goppa code as one
constructed with a degree $2t$ Goppa polynomial by utilising the identity of
\cite{skhn76}, and then using the GSA+KV, one reaches $J_2$, see
e.g.~\cite{augot10} or \cite[Section 9.6]{rothBook}. Alternatively, one can
with the identity
of \cite{skhn76} use Bernstein's decoder for alternant codes which works in a
manner closely related to the GSA+KV \cite{bernstein11simplified}.

The K\"otter-Vardy method does not directly translate to the Wu list decoder, so
a different approach is required. Our algorithm continues the original
insights by Patterson by rewriting the Key Equation of the Goppa code into a
reduced one of only half the degrees. This combined with list decoding turns
out to also reach $J_2$.

\subsection{Organisation}

The remainder of this article is organised as follows: The introduction ends
with
some notation and notes on the modules that will be considered. In Section
\ref{sec:ea} we describe how solutions to certain Key Equation-like equations
can be described using these modules, and how the EA can find these. In Section
\ref{sec:ratInter}, we introduce the problem of rational interpolation as well
as a method to solve it for some parameters. We then show how the solution of
the rational interpolation problem can be computed with low
complexity. These two
theoretical sections are then utilised in sections \ref{sec:grs} and
\ref{sec:goppa} for decoding GRS codes and binary Goppa codes respectively. For
each of those code families, we analyse the parameters needed for solving the
associated rational interpolation problem, and we compare asymptotic running
times with previous decoding methods.

\subsection{Notation}

Let $\F$ be a finite field. Define $R \subset \F[x,y]$ as all bivariate
polynomials over $\F$ with $y$-degree at most 1. In this article, we will be
considering $\F[x]$-modules that are subsets of $R$. Such a module could just
as well be regarded as a subset of $\F[x] \times \F[x]$; however, using
bivariate polynomials does give certain notational advantages.

We can define term orders as well as Gr\"obner bases over such modules. These
definitions follow the general intuition from Gr\"obner bases over polynomial
ideals. For an extensive presentation, see e.g.~\cite{coxUsingAlgGeo}.

One thing to keep in mind is that term orders over $\F[x]$ sub-modules of $R$
differs slightly from term orders over the polynomial ring $\F[x,y]$. For
instance, the weighted degree term order giving $x$ weight 1 and $y$ weight 0,
as well as lexicographically ordering $x$ before $y$, is a valid module term
order for these modules, while it is not valid over $\F[x,y]$.

For our discussions on modules and term orders, we define the following
notational short-hands, where $<$ is a module term order and $h(x,y) \in R$:
\begin{itemize}
\item $[ h_1, \ldots, h_t ] \defeq
      \big\{ \sum_{i=1}^t a_i(x)h_i(x,y) \mid a_i(x) \in \F[x] \big\}$ is
      the $\F[x]$-module generated by $h_1,\ldots,h_t \in \F[x,y]$.
\item $\deg f    \defeq \textrm{deg}~f(x)$ for $f(x) \in \F[x]$. Also
      define $\deg f= -\infty$ when $f(x) = 0$.
\item $\LTord < h$ is the leading term of $h$ wrt.~$<$.
\item $\xdeg_<(h) \defeq x\textrm{deg}(\LTord < h)$, where
  $x\textrm{deg}(x^iy^j)=i$.
\item $\ydeg_<(h) \defeq y\textrm{deg}(\LTord < h)$, where
  $y\textrm{deg}(x^iy^j)=j$.
\end{itemize}
Note in particular here that the $\ydeg_<(h)$ of an $h \in R$ is \emph{not}
the usual $y$-degree of $h$, but instead the $y$-degree of its leading term.
In a sense, it describes the \emph{position} of the leading term in $h$.

\section{The Euclidean algorithm and Gr\"obner bases}\label{sec:ea}
\begingroup \newcommand{\lr}{{<_\mu}}

Consider the following problem generalised from the Key Equation of algebraic
coding theory: we are given two polynomials $p(x), q(x)$, and we seek two other
polynomials $\gamma(x), \delta(x)$ of relatively low degrees which satisfy
\begin{equation}\label{eqn:genKeyEq}
  \gamma(x)q(x) \equiv \delta(x) \mod p(x)
\end{equation}
This equation alone might not be sufficient to uniquely determine $\gamma(x)$
and $\delta(x)$, but we would still like to gather as much information from the
above equation as possible, in a certain sense.
 
Consider now the set $M = [ p(x), y-q(x) ] \in \F[x,y]$ as a module over
$\F[x]$. We easily see that the polynomial $\delta(x) - y\gamma(x)$ is in $M$
by using the above congruence:
\begin{align*}
  \delta(x) - y\gamma(x) &= (\gamma(x)q(x) - w(x)p(x)) - y\gamma(x) \\
               &= -\gamma(x)(y-q(x)) - w(x)p(x)
\end{align*}
for some polynomial $w(x)$. We might therefore study $M$ in order to get a good
description of $\gamma(x)$ and $\delta(x)$; we could, for example, seek a basis
for $M$ in which $\delta(x) - y\gamma(x)$ described in this basis has
coefficients of low degree. As we will see, this can be given by a Gr\"obner
basis under a certain module term order.

For a given ordering, we have the following easy condition for a generating set
to be a Gr\"obner basis for the considered type of modules:
\begin{proposition}\label{prop:moduleGrob}
    Let $M = [p(x),y-q(x)]$ be a module over $\F[x]$ for two polynomials
    $p(x),q(x)$ and let $<$ be a module term order. A set $G = \{ h_1(x,y),
    h_2(x,y)
    \}$ is a Gr\"obner basis of $M$ under $<$ if and only if $[G] = M$ and
    $\ydeg_<(h_1) \neq \ydeg_<(h_2)$.
\end{proposition}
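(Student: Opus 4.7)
The structural point underlying the proof is that every nonzero element of $R$ has $y$-degree at most $1$, so $\ydeg_<(h)\in\{0,1\}$ for each nonzero $h\in M$, and the module term order only permits one leading term to divide another when both sit in the same $y$-position. I would treat the two directions of the biconditional separately.

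For necessity, I would argue by contradiction: supposing $\ydeg_<(h_1)=\ydeg_<(h_2)=j$, I exhibit an element of $M$ whose leading term lies in the opposite $y$-position and therefore cannot be reduced by $G$. If $j=0$, the element $yp(x)=p(x)(y-q(x))+q(x)\cdot p(x)$ is visibly an $\F[x]$-combination of the two generators of $M$, so lies in $M$, and all of its monomials sit in the $y^1$-position. If $j=1$, the generator $p(x)$ itself lies in $M$ and has its leading term in the $y^0$-position. Either case contradicts $G$ being a Gröbner basis.

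For sufficiency, assume $[G]=M$ and, without loss of generality, $\ydeg_<(h_1)=0$ and $\ydeg_<(h_2)=1$. Given a nonzero $f\in M$, write $f=a_1(x)h_1+a_2(x)h_2$ with $a_i\in\F[x]$ using $[G]=M$. For each $i$ with $a_i\neq 0$, the leading term of $a_ih_i$ is $x^{\deg a_i}$ times a scalar multiple of $\LTord{<}{h_i}$ and therefore sits in the $y$-position $\ydeg_<(h_i)$. Since these two positions differ, the monomials $\LTord{<}{a_1h_1}$ and $\LTord{<}{a_2h_2}$ are distinct; the $<$-larger of the two cannot be cancelled by anything in the opposite summand, because every monomial there is strictly smaller in the term order. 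Consequently $\LTord{<}{f}$ equals whichever of $\LTord{<}{a_ih_i}$ is $<$-larger, which is divisible by the corresponding $\LTord{<}{h_i}$, as required for a Gröbner basis.

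The only real obstacle is in the necessity direction: one has to produce explicit elements of $M$ with leading term in a prescribed $y$-position while knowing nothing concrete about $<$. Once one observes that $yp(x)$ admits the $\F[x]$-combination $p(x)(y-q(x))+q(x)p(x)$, and that $p(x)$ itself lies in $M$, both subcases collapse to a one-line verification.
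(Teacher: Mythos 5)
Your proof is correct, but it takes a more self-contained route than the paper, which simply invokes Buchberger's $S$-criterion: since $\ydeg_<(h_1)\neq\ydeg_<(h_2)$, the leading terms of $h_1$ and $h_2$ lie in different $y$-positions (different module components), so the only $S$-pair never needs to be checked, and $[G]=M$ then gives the Gr\"obner basis property at once. Your sufficiency argument is in effect an unwinding of why that criterion is trivially satisfied here: you verify directly from the definition that for any nonzero $f=a_1h_1+a_2h_2\in M$ the top monomials of $a_1h_1$ and $a_2h_2$ sit in distinct $y$-positions, hence cannot cancel, so $\LTord<f$ is divisible by $\LTord<{h_1}$ or $\LTord<{h_2}$ (the step $\LTord<{a_ih_i}=x^{\deg a_i}\LTord<{h_i}$ is the standard compatibility of a module term order with multiplication by $x$-monomials, fine to use). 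Your necessity direction is the part the paper leaves entirely implicit, and you handle it cleanly with the explicit witnesses $yp(x)=p(x)(y-q(x))+q(x)p(x)\in M$ (all monomials in the $y^1$-position) and $p(x)\in M$ (leading term in the $y^0$-position); note this tacitly assumes $p(x)\neq 0$, which is harmless since the paper always works with $\deg p>\deg q\geq-\infty$, i.e.\ $p\neq 0$. What the paper's appeal to the $S$-criterion buys is brevity; what your argument buys is a complete, elementary verification of both directions without citing the criterion.
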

\begin{proof}
  Follows straight-forwardly by applying Buchberger's $S$-criterion.
\end{proof}

For any $\mu \geq 0$, define now the module term order $\lr$ as the $(1,\mu)$
weighted-degree ordering of $(x,y)$ with $x > y$. For example, $x^{\mu-1} \lr\
y\ \lr x^\mu$. We can now characterise the form of a Gr\"obner basis for $M$
under
this module term order, as well as the form of $\delta(x) - y\gamma(x)$ in
this basis, given a limit on the degree of $\gamma$:
\begin{proposition}\label{prop:grobDegs}
Let $G =  \{h_1(x,y), h_2(x,y) \}$ be a Gr\"obner basis for $M = [p(x),y-q(x)]$
under $\lr$
with $\ydeg_\lr(h_1) = 0$. Then $\xdeg_\lr(h_1) + \xdeg_\lr(h_2) = \deg
p$.\\
Furthermore, if $\delta(x) - y\gamma(x) \in M$, then there exist polynomials
$f_1(x), f_2(x)$ such that
\[
  \delta(x) - y\gamma(x) = f_1(x)h_1(x,y) + f_2(x)h_2(x,y)
\]
If $\delta(x) <_\mu y\gamma(x)$ then these polynomials satisfy
\begin{IEEEeqnarray*}{rCl}
          \deg f_1 &\leq& \deg \gamma + \mu - \xdeg_\lr(h_1) - 1\\
          \deg f_2 &=& \deg \gamma - \xdeg_\lr(h_2)
\end{IEEEeqnarray*}
If $\delta(x) >_\mu y\gamma(x)$ then they instead satisfy
\begin{IEEEeqnarray*}{rCl}
    \deg f_1 &=& \deg \delta - \xdeg_\lr(h_1) \\
    \deg f_2 &\leq& \deg \delta - \mu - \xdeg_\lr(h_2)
\end{IEEEeqnarray*}
\end{proposition}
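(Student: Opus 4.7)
The plan is to parametrize $h_i(x,y) = a_i(x) + b_i(x) y$ for $i = 1, 2$ and reduce everything to degree bookkeeping of these four univariate polynomials. From $\ydeg_\lr(h_1) = 0$ and Proposition \ref{prop:moduleGrob} (which forces $\ydeg_\lr(h_2) = 1$), one reads off the comparisons $\deg a_1 \geq \deg b_1 + \mu$ and $\deg b_2 + \mu \geq \deg a_2$; in each, the inequality is strict in exactly one of the two regimes $\mu = 0$ or $\mu > 0$ because of the lex tiebreaker $x > y$. Adding these therefore yields $\deg a_1 + \deg b_2 > \deg a_2 + \deg b_1$ unconditionally, and $\xdeg_\lr(h_1) = \deg a_1$ while $\xdeg_\lr(h_2) = \deg b_2$.

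For the first claim, I would use a determinant argument: since $M$ is a free $\F[x]$-module of rank $2$ and both $\{p, y - q\}$ and $\{h_1, h_2\}$ are $\F[x]$-bases, the base-change matrix is unimodular over $\F[x]$. Writing each basis in the $\F[x]$-basis $\{1, y\}$ of $R$, this gives $a_1 b_2 - a_2 b_1 = c \cdot p$ for some nonzero $c \in \F$. The strict inequality above then forces $\deg(a_1 b_2 - a_2 b_1) = \deg a_1 + \deg b_2 = \xdeg_\lr(h_1) + \xdeg_\lr(h_2)$, and this equals $\deg p$.

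Existence of a representation $\delta - y\gamma = f_1 h_1 + f_2 h_2$ is immediate from $\{h_1, h_2\}$ generating $M$. The key structural fact for the degree bounds is that $\LTord\lr{f_i h_i}$ inherits the $y$-degree of $\LTord\lr{h_i}$ since $f_i \in \F[x]$, so $\LTord\lr{f_1 h_1}$ has $y$-degree $0$ and $\LTord\lr{f_2 h_2}$ has $y$-degree $1$. These two leading terms therefore cannot cancel in the sum, so the $\lr$-leading term of $f_1 h_1 + f_2 h_2$ is the $\lr$-larger of the two, with ties broken in favour of the $y^0$ term. In Case 1, $\LTord\lr{\delta - y\gamma}$ has $y$-degree $1$ and weighted degree $\mu + \deg\gamma$, so it must coincide with $\LTord\lr{f_2 h_2}$ while $\LTord\lr{f_1 h_1}$ must be strictly smaller in $\lr$; this produces the stated equality on $\deg f_2$ and the strict bound on $\deg f_1$. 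Case 2 is symmetric: the leading term now has $y$-degree $0$, so $\LTord\lr{f_1 h_1}$ matches the LHS while $\LTord\lr{f_2 h_2}$ is merely $\lr$-dominated (a tie is allowed since the $y^0$ term still wins), producing the equality on $\deg f_1$ and the weak bound on $\deg f_2$.

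The main subtlety is tracking the tie-breaking rule of $\lr$ between monomials of different $y$-degree but equal weighted degree. This is what makes the unconditional strict inequality $\deg(a_1 b_2) > \deg(a_2 b_1)$ in the determinant argument work, and it also explains the asymmetry between Case 1 (strict bound on $f_1$) and Case 2 (weak bound on $f_2$) visible in the statement.
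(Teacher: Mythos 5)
Your overall strategy is sound, and for the second half it is a genuinely different route from the paper's: the paper walks through the multivariate division algorithm, tracking which of $h_1,h_2$ is used when and arguing that the remainder ``decreases'', whereas you argue directly that $\LTord{<_\mu}{f_1h_1}$ and $\LTord{<_\mu}{f_2h_2}$ have different $y$-degrees, hence cannot cancel, so the leading term of $\delta-y\gamma$ must coincide with the $<_\mu$-larger of the two. That argument applies to \emph{any} representation and is cleaner than the division walk-through. For the first claim, your unimodular base-change determinant $a_1b_2-a_2b_1=c\,p$ is exactly the paper's identity $h_{21}h_1(x,y)-h_{11}h_2(x,y)=cp(x)$ (the paper reaches it via coprimality of the $y$-parts rather than unimodularity), so that half is essentially the same proof with the degree bookkeeping made more explicit.

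The one genuine problem is that your two statements about the tie-break contradict each other, and the contradiction sits exactly where you locate ``the main subtlety''. In your first paragraph you read the tie-break as ordinary lex with $x>y$, so that at $\mu=0$ a weighted-degree tie is won by the $y^1$ monomial (this is why you claim $\deg a_1\ge\deg b_1+\mu$ is strict precisely when $\mu=0$). Later you assert that ties are ``broken in favour of the $y^0$ term'', and Case 1 really needs that: from $\LTord{<_\mu}{f_1h_1}<_\mu x^{\deg\gamma}y$ you may conclude $\deg f_1+\xdeg_{<_\mu}(h_1)<\deg\gamma+\mu$ (hence the $-1$) only if a weighted-degree tie would be won by the $y^0$ monomial. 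Under your first-paragraph convention, at $\mu=0$ the same comparison yields only $\deg f_1\le\deg\gamma-\xdeg_{<_0}(h_1)$, one worse than claimed; and this is not merely a slack in the proof. Take $\mu=0$, $h_1=x^2+y$, $h_2=xy+1$ (a Gr\"obner basis of $M=[x^3-1,\ y+x^2]$ with $\ydeg_{<_0}(h_1)=0$, $\xdeg_{<_0}(h_1)=2$, $\xdeg_{<_0}(h_2)=1$) and $f_1=1$, $f_2=x$: then $\delta-y\gamma=(x^2+x)+y(x^2+1)$ has $\deg\delta=\deg\gamma=2$, which under lex-$x>y$ ties lands in Case 1 and violates $\deg f_1\le\deg\gamma-\xdeg_{<_0}(h_1)-1=-1$.

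The resolution is that the order intended by the paper (cf.\ its example $x^{\mu-1}<_\mu y<_\mu x^\mu$) breaks \emph{every} weighted-degree tie in favour of the $y$-degree-$0$ monomial, for all $\mu\ge 0$. With that convention your Case 1 argument does deliver the $-1$, the element above falls into Case 2 (where the stated bounds hold), and your first paragraph should instead say: the inequality coming from $\ydeg_{<_\mu}(h_2)=1$, namely $\deg b_2+\mu>\deg a_2$, is always strict, while $\deg a_1\ge\deg b_1+\mu$ need not be. One strict inequality is all your determinant argument requires, so the first claim survives unchanged. In short: fix the convention, state it once, and use it consistently in both halves; everything else in your proposal is correct.
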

\begin{proof}
Let us first prove the degree bounds on $h_1$ and $h_2$. Write $h_1(x,y) =
h_{10}(x)+yh_{11}(x)$ and $h_2(x,y) = h_{20}(x)+yh_{21}(x)$. Note that
$h_{11}(x)$ and $h_{21}(x)$ are coprime since some linear combination of them
gives $1$, as $y - q(x) \in M$. Then $f(x) = h_{21}(x)h_1(x,y) -
h_{11}(x)h_2(x,y) \in M$ and does not contain $y$, and is the lowest degree
polynomial in $M$ to do so; this must be $cp(x)$ for some $c \in \F$, given the
definition of $M$. Therefore $\deg f = \deg p$. However, by expanding the
expression for $f$, we get
\begin{align*}
    \deg f  &= \deg \big( h_{21}(x)h_{10}(x) - h_{11}(x)h_{20}(x) \big) \\
            &= \deg (h_{21}(x)h_{10}(x) )
\end{align*}
where we have used $\ydeg_\lr h_1 = 0$ and $\ydeg_\lr h_2 = 1$, the latter
implied by Proposition \ref{prop:moduleGrob}.

Now for the statement on $\delta(x) - y\gamma(x)$. It is clear that $f_1, f_2$
satisfying the first of the equations exist, but we need to show the degree
bounds. Assume first $\delta(x) <_\mu y\gamma(x)$. $f_1, f_2$ can be found by
the division algorithm, so we consider how this would run. As $\delta(x) <_\mu
y\gamma(x)$, we know that $h_2$ will be used as a divisor first, and it will
divide so as to cancel the leading term; this first division therefore
determines the degree of $f_2$ to be $\deg \gamma - \deg h_{21} = \deg \gamma -
\deg_\lr^x h_2$. We might then perform more divisions by $h_2$ until at one
point we use $h_1$; by then the remainder will be reduced to some $\grave
\delta(x) - y\grave \gamma(x)$ with also $\grave \delta(x) >_\mu y\grave
\gamma(x)$, and this division then determines the maximal degree of $f_1$ to
$\deg \grave \delta - \deg h_{10}$. The division algorithm ensures us that the
iterations has ``decreased'' the remainder, i.e. $\grave \delta(x) - y\grave
\gamma(x) <_\mu \delta(x) - y\gamma(x)$ and therefore $\grave \delta(x) <_\mu
y\gamma(x)$.
As $\lr$ lexicographically orders $x$ before $y$, we therefore must have $\deg
\grave \delta(x) \leq \deg \gamma + \mu - 1$. In all, we get $\deg f_1 \leq \deg
\gamma + \mu - \deg_\lr^x h_1 - 1$. The case $\delta(x) >_\mu y\gamma(x)$ runs
similarly.
\end{proof}

It turns out that the EA, if running on $p(x)$ and $q(x)$, in a certain manner
produces Gr\"obner bases of the module $M$ of module term order $<_\mu$ . To
prove
this, we first need to remind of well-known results on the intermediate
polynomials computed by the algorithm. For brevity, we don't present the EA
algorithm in full, and consequently we can't prove the following lemma, but
there are many good expositions on the algorithm which includes these results,
e.g. Tilborg \cite[Lemma 4.5.4]{tilborgCoding} or Dornstetter
\cite{dornstetter87}.

Consider running the Extended Euclidean Algorithm (EA) on $p(x)$ and $q(x)$,
and denote by $s_i(x)$ the remainder polynomial computed in each iteration $i$;
that is, $s_0(x) = p(x)$, $s_1(x) = q(x)$ and $s_2(x), s_3(x), \ldots, s_N(x),
s_{N+1}(x)$
will be the following remainders computed, where we know by the EA that $s_N(x)
= \gcd(p,q)$ and $s_{N+1}=0$. Then the EA in each iteration $i$ also computes
polynomials $u_i(x), v_i(x)$ such that $s_i(x) = u_i(x)p(x) + v_i(x)q(x)$.
Furthermore, we have the following lemma, whose proof is easy by induction
on the precise computations of the EA:
\begin{lemma}\label{lem:EAprops}
If the EA is run on polynomials $p(x), q(x)$ with $\deg p > \deg q$, the
intermediate polynomials satisfy for each iteration $i=1,\ldots,N+1$:
\newcounter{itemcounter}
\def\Item{%
    \addtocounter{itemcounter}{1}%
    \quad\textrm{\upshape(\roman{itemcounter})}&%
}
\begin{IEEEeqnarray*}{-s?rCl}
\Item  \IEEEeqnarraymulticol{3}{s}{%
            $\deg s_i$ is a decreasing function in $i$.
        } \\
\Item  (-1)^i         &=& u_i(x)v_{i-1}(x) - u_{i-1}(x)v_i(x)    \\
\Item  s_i(x) &=& u_i(x)p(x) + v_i(x)q(x) \\
\Item  \deg p &=& \deg v_i + \deg s_{i-1}
\end{IEEEeqnarray*}
\end{lemma}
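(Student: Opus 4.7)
The plan is a straightforward simultaneous induction on the iteration index $i$, using the standard recurrences of the EA. In iteration $i$ one computes the quotient $a_i(x)$ and remainder from dividing $s_{i-1}$ by $s_i$, yielding
\[
  s_{i+1} = s_{i-1} - a_i s_i,\qquad
  u_{i+1} = u_{i-1} - a_i u_i,\qquad
  v_{i+1} = v_{i-1} - a_i v_i,
\]
with initial data $s_0 = p,\ s_1 = q,\ u_0 = 1,\ u_1 = 0,\ v_0 = 0,\ v_1 = 1$. Part (i) is immediate: by construction $s_{i+1}$ is the Euclidean remainder of $s_{i-1}$ by $s_i$, so $\deg s_{i+1} < \deg s_i$. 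For (iii), direct substitution of the recurrences gives $u_{i+1}p + v_{i+1}q = (u_{i-1}p + v_{i-1}q) - a_i(u_i p + v_i q) = s_{i-1} - a_i s_i = s_{i+1}$, and the two base cases $i=0,1$ are trivial. For (ii), the same expansion yields $u_{i+1}v_i - u_i v_{i+1} = -(u_i v_{i-1} - u_{i-1}v_i)$, so the quantity flips sign at each step; the base case $u_1 v_0 - u_0 v_1 = -1$ then fixes the overall sign.

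For (iv) I would prove the identity $\deg v_i = \deg p - \deg s_{i-1}$ by induction, carrying in parallel an auxiliary claim that $\deg v_{i+1} = \deg a_i + \deg v_i$, i.e.\ that the summand $a_i v_i$ strictly dominates $v_{i-1}$ in the recurrence for $v_{i+1}$. Granting the auxiliary claim, and using $\deg a_i = \deg s_{i-1} - \deg s_i$ (from the Euclidean division), the inductive step reads
\[
  \deg v_{i+1} = \deg a_i + \deg v_i = (\deg s_{i-1} - \deg s_i) + (\deg p - \deg s_{i-1}) = \deg p - \deg s_i,
\]
which is exactly (iv) at the next index. The base case is $\deg v_1 = 0 = \deg p - \deg s_0$.

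The main obstacle is the strict-dominance bookkeeping inside the induction for (iv): one has to verify $\deg v_{i-1} < \deg a_i + \deg v_i$ at every step, to be sure the recurrence $v_{i+1} = v_{i-1} - a_i v_i$ does not hide the leading term of $a_i v_i$ through cancellation. This follows by combining the auxiliary identity at index $i$ (so $\deg v_i = \deg a_{i-1} + \deg v_{i-1}$) with (i), which guarantees $\deg a_j \geq 1$ for all $j \geq 1$; but it does force one to carry both inductive statements together rather than proving them separately. Once this is set up, everything else is routine bookkeeping on the EA recurrences.
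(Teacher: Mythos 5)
Your proof is correct, and it is exactly the argument the paper alludes to: the paper does not spell out a proof of this lemma at all, stating only that it ``is easy by induction on the precise computations of the EA'' and deferring to standard references (Tilborg, Dornstetter). Your simultaneous induction on the EA recurrences, including the auxiliary no-cancellation claim $\deg v_{i+1} = \deg a_i + \deg v_i$ needed for (iv), is the standard and complete way to carry that induction out.
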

We are now in a position to show how each iteration of the EA gives rise to a
generating set for $M$:
\begin{proposition}\label{prop:EAgens}
Let the EA be run on two polynomials $p(x),q(x)$ with $\deg p > \deg q$. In
each iteration $i$, let $G = \{ h_1(x,y), h_2(x,y) \}$ with
\begin{align*}
  h_1(x,y)  &= s_{i-1}(x) - v_{i-1}(x)y \\
  h_2(x,y)  &= s_i(x) -   v_i(x)y
\end{align*}
Then $[G] = M$ where $M = [p(x), y-q(x)]$.
\end{proposition}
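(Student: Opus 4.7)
The plan is to prove the two inclusions $[G] \subseteq M$ and $M \subseteq [G]$ separately. Both directions rely on the identities recorded in Lemma \ref{lem:EAprops}, so the proof will essentially be a bookkeeping exercise with the $u_i, v_i, s_i$ produced by the EA.

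For the first inclusion, I will use identity (iii). Rewriting gives
\[
  s_j(x) - v_j(x) y \;=\; u_j(x)p(x) + v_j(x)q(x) - v_j(x) y \;=\; u_j(x)p(x) \,-\, v_j(x)\bigl(y - q(x)\bigr),
\]
which is manifestly an $\F[x]$-linear combination of $p(x)$ and $y-q(x)$. Applying this to $j = i-1$ and $j = i$ shows immediately that both $h_1$ and $h_2$ lie in $M$, hence $[G] \subseteq M$.

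For the reverse inclusion I will exhibit $p(x)$ and $y - q(x)$ as $\F[x]$-combinations of $h_1$ and $h_2$. The natural candidates are obtained by cancelling either the $y$-part or the constant part of a combination. Consider
\[
  v_i(x) h_1(x,y) - v_{i-1}(x) h_2(x,y) \;=\; v_i(x) s_{i-1}(x) - v_{i-1}(x) s_i(x),
\]
whose $y$-terms cancel. Expanding each $s_j$ via (iii) and collecting, the $q(x)$-contributions cancel, leaving $p(x)\bigl(v_i u_{i-1} - v_{i-1} u_i\bigr)$, which by identity (ii) equals $(-1)^{i+1} p(x)$. Similarly,
\[
  u_i(x) h_1(x,y) - u_{i-1}(x) h_2(x,y) \;=\; \bigl(u_i s_{i-1} - u_{i-1} s_i\bigr) - \bigl(u_i v_{i-1} - u_{i-1} v_i\bigr) y,
\]
and using (iii) and (ii) again collapses this to $(-1)^{i+1}\bigl(y - q(x)\bigr)$. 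Hence both $p(x)$ and $y-q(x)$ lie in $[G]$, giving $M \subseteq [G]$.

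Neither step looks like a genuine obstacle; the only thing to be careful about is tracking the signs from identity (ii) correctly and making sure the argument also covers the base iteration $i = 1$, where the standard initial data $u_0 = 1, v_0 = 0, u_1 = 0, v_1 = 1$ give $h_1 = p(x)$ and $h_2 = q(x) - y$ directly, so $[G] = M$ is immediate.
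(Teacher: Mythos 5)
Your proof is correct and follows essentially the same route as the paper: both rest on identities (ii) and (iii) of Lemma \ref{lem:EAprops}, the paper phrasing the argument as ``the $2\times 2$ change-of-basis matrix sending $(p,\,y-q)$ to $(h_1,h_2)$ has unit determinant $(-1)^i$,'' while you unpack that same fact by writing out the two inclusions and the explicit adjugate combinations $v_i h_1 - v_{i-1}h_2 = (-1)^{i+1}p$ and $u_i h_1 - u_{i-1}h_2 = (-1)^{i+1}(y-q)$. The computations and sign bookkeeping check out, so this is just a more explicit rendering of the paper's determinant argument.
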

\begin{proof}
  Inserting the expression for $s_i(x)$ and $s_{i-1}$ from Lemma
  \ref{lem:EAprops} (iii), we get
  \begin{IEEEeqnarray*}{rCl}
    \mtrx{ h_1(x,y) \\ h_2(x,y) }
      &=& \mtrx{ u_{i-1}(x) & -v_{i-1}(x) \\ u_i(x) & -v_i(x)}
    \mtrx{ p(x) \\ y-q(x) }
  \end{IEEEeqnarray*}
  Now $h_1(x,y), h_2(x,y)$ and $p(x), y-q(x)$ will be bases for the same module
  if and only if the determinant of the $2\times2$-matrix is a unit. But this
  is stated in Lemma \ref{lem:EAprops} (ii).
\end{proof}

We can now wrap up and show the main result of this section:
\begin{proposition}\label{prop:EAgrob}
Let $p(x),q(x)$ be two polynomials with $\deg p > \deg q$, and let $\mu \geq 0$
be an integer. If the EA is run on $p(x), q(x)$ and it is halted on the first
iteration
$i$ where $\deg s_i < \deg v_i + \mu$, then $G = \{ h_1(x,y), h_2(x,y) \}$ is a
Gr\"obner basis of $M = [p(x), y-q(x)]$ with module term order $<_\mu$, where
$h_1, h_2$ are chosen as in Proposition \ref{prop:EAgens} for iteration $i$.
\end{proposition}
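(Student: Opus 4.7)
The plan is to invoke Proposition \ref{prop:moduleGrob} directly. That proposition reduces the Gröbner-basis property of $G$ under $<_\mu$ to two conditions: $[G] = M$, and $\ydeg_{<_\mu}(h_1) \neq \ydeg_{<_\mu}(h_2)$. The first is exactly what Proposition \ref{prop:EAgens} asserts for the specific $h_1, h_2$ built from iteration $i$ of the EA, so the entire remaining task is to compare the two $y$-degrees.

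For that, I would first record a small general fact about elements of $R$ of the shape $s(x) - v(x)\, y$. Under $<_\mu$ only two monomials compete for the leading position: the leading monomial of $s(x)$, of weighted degree $\deg s$, and $x^{\deg v} y$, of weighted degree $\deg v + \mu$. A direct comparison, with the lexicographic tiebreak $x > y$ used when the weighted degrees agree, yields $\ydeg_{<_\mu}(s - vy) = 1$ precisely when $\deg v + \mu > \deg s$, and $\ydeg_{<_\mu}(s - vy) = 0$ otherwise. Armed with this, the rest is essentially bookkeeping: applied to $h_2 = s_i - v_i y$, the halting criterion $\deg s_i < \deg v_i + \mu$ gives $\ydeg_{<_\mu}(h_2) = 1$; applied to $h_1 = s_{i-1} - v_{i-1} y$, the minimality of $i$ forces $\deg s_{i-1} \geq \deg v_{i-1} + \mu$, hence $\ydeg_{<_\mu}(h_1) = 0$. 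The two $y$-degrees differ, and Proposition \ref{prop:moduleGrob} finishes the job.

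I do not expect any real obstacle. The only delicate points are the lex tiebreak in the borderline case $\deg s_{i-1} = \deg v_{i-1} + \mu$, which is resolved in favour of the $x$-only monomial because it has strictly larger $x$-degree (for $\mu \geq 1$), and the assurance that a halting index $i$ exists at all; the latter follows from Lemma \ref{lem:EAprops} (i) and (iv), since $\deg v_i = \deg p - \deg s_{i-1}$ grows with $i$ while $\deg s_i$ shrinks, so the halting inequality is eventually satisfied. In spirit the proposition is just a translation of the EA's degree invariants into Gröbner-basis language through Propositions \ref{prop:moduleGrob} and \ref{prop:EAgens}.
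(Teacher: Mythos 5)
Your proposal is correct and is essentially the paper's own proof: both reduce, via Propositions \ref{prop:moduleGrob} and \ref{prop:EAgens}, to checking that $\ydeg_{<_\mu}(h_1)=0$ and $\ydeg_{<_\mu}(h_2)=1$, the latter from the halting condition $\deg s_i < \deg v_i + \mu$ and the former from minimality of the halting index (with existence of that index secured by $s_{N+1}=0$, $\deg v_{N+1}\geq 0$, just as you indicate). The only nuance is that in the borderline case with $\mu=0$ your ``strictly larger $x$-degree'' justification of the tiebreak does not apply; there the tie between $x^{\deg s_{i-1}}$ and $x^{\deg v_{i-1}}y$ is settled directly by the convention that $<_\mu$ prefers the $y$-free monomial at equal weighted degree (the same convention the paper relies on in the proof of Proposition \ref{prop:grobDegs}), so the conclusion is unaffected.
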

\begin{proof}
Clearly there \emph{is} a first iteration $i$ where $\deg s_i < \deg v_i +
\mu$, for $\deg s_{N+1} = -\infty$ and $\deg v_{N+1} \geq 0$. Thus, at least
the $(N+1)$st iteration satisfies the requirement. Conversely, the $0$'th
iteration does not satisfy it as $\deg s_0 = \deg p$ and $\deg v_0 = -\infty$.
Now to show that $G$ is a Gr\"obner basis. From Proposition \ref{prop:EAgens}
we
know that $[G] = M$, so by Proposition \ref{prop:moduleGrob} we only need to
show that the leading terms of $h_1$ and $h_2$ have different $y$-degree under
$<_\mu$. But by the choice of $i$, we have both $\ydeg_\lr(h_1) = 0$ and
$\ydeg_\lr(h_2) = 1$.
\end{proof}

\endgroup 
\section{Rational Interpolation}\label{sec:ratInter}

We will now describe how to solve the problem of finding rational curves that
go through at least some number of prescribed points. The method is a
generalisation of the GSA \cite{guruSudan99}, and first
described by Wu \cite{wu08}. The formulation of our main theorem, Theorem
\ref{thm:interpol}, avoids some special handling of points at infinity and is
due to Trifonov \cite{trifonovWu10}.

We are basically interested in a rational expression $\ratex{f_2}{f_1}$ with
numerator and denominator of low degrees, which goes through at least some
$\tau$ out of $n$ points $\big( (x_0, \beta_0), \ldots, (x_{n-1},
\beta_{n-1})\big)$ where all $x_i \in \F$ while $\beta_i \in \F \cup
\{\infty\}$. To handle the points at infinity, we can instead consider these as
partially projective points $(x_i, y_i: z_i)$ with $\frac {y_i}{z_i} = \beta_i$
whenever $\beta_i \neq \infty$ and $(y_i, z_i) = (1,0)$ otherwise.

In this language, the interpolation amounts to finding low-degree polynomials
$f_1(x)$ and $f_2(x)$ such that for at least $\tau$ values of $i$, we have
$y_if_1(x_i) - z_if_2(x_i) = 0$. The following theorem is a paraphrasing of
\cite[Lemma 3]{trifonovWu10}; we omit the proof which is a
generalisation of the proof of \cite[Lemma 4]{guruSudan99}.

First a notational short-hand: For a $Q \in \F[x,y,z]$, we define
\begin{IEEEeqnarray*}{l}
\wdeg{w_x,w_y,w_z}Q(x,y,z) \defeq \max\{ iw_x+jw_y+hw_z  \\
  \qquad\qquad\qquad \mid \alpha x^iy^jz^h\ \textrm{is a monomial
           of } Q(x,y,z) \}
\end{IEEEeqnarray*}
That is, $\wdeg{w_x,w_y,w_z} Q(x,y,z)$ is the $(w_x,w_y,w_z)$-weighted degree
of $Q$. Now the theorem:

\begin{theorem}\label{thm:interpol}
Let $\ell, s$ and $\tau$ be positive integers, and let $\{(x_0,y_0,z_0),
\ldots, (x_{n-1},y_{n-1},z_{n-1}) \}$ be $n$ points in $\F^3$ where for all $i$
either $y_i$ or $z_i$ is non-zero. Assume that $Q(x,y,z) = \sum_{i=0}^\ell
Q_i(x)y^iz^{\ell-i}$ is a non-zero partially homogeneous trivariate polynomial
such that $(x_i, y_i, z_i)$ are zeroes of multiplicity $s$ for all $i =
0,\ldots,n-1$, and $\wdeg{1,w_2,w_1}Q < s\tau$, for two $w_1, w_2 \in \RR_+
\cup \{0\}$.
Any two coprime polynomials $f_1(x), f_2(x)$ satisfying $\deg f_1 \leq
w_1$, $\deg f_2 \leq w_2$, as well as $y_if_1(x_i) + z_if_2(x_i) = 0$ for at
least $\tau$ values of $i$, will satisfy $(yf_1(x) + zf_2(x)) \mid Q(x,y,z)$.
\end{theorem}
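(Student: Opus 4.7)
The plan is the familiar reduction to a univariate polynomial that is the core of the Guruswami--Sudan argument. Define
\[
  R(x) \defeq Q\bigl(x,\,-f_2(x),\,f_1(x)\bigr)
  \;=\;\sum_{i=0}^{\ell} Q_i(x)\bigl(-f_2(x)\bigr)^{i} f_1(x)^{\ell-i},
\]
and try to prove $R(x)\equiv 0$. Once that is established, note that $yf_1(x)+zf_2(x)$ is primitive as an element of $\F[x][y,z]$ (its content in $\F[x]$ is $\gcd(f_1,f_2)=1$) and has $(y,z)=(-f_2(x),f_1(x))$ as a zero in the fraction field; polynomial division over $\mathbb F(x)[y,z]$ combined with Gauss's lemma then yields $(yf_1(x)+zf_2(x))\mid Q(x,y,z)$ in $\F[x][y,z]$, which is the conclusion.

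The proof that $R\equiv 0$ combines a degree upper bound with a zero count. For the degree, each monomial $\alpha\,x^{a}y^{i}z^{\ell-i}$ of $Q$ contributes $\pm\alpha\,x^{a}f_2(x)^{i}f_1(x)^{\ell-i}$ to $R$, of degree at most $a+iw_2+(\ell-i)w_1<s\tau$ by the weighted-degree hypothesis. Hence $\deg R<s\tau$.

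For the root count, I claim each of the $\tau$ indices $i$ with $y_if_1(x_i)+z_if_2(x_i)=0$ contributes a zero of multiplicity at least $s$ of $R$ at $x=x_i$. The key observation is that $y_if_1(x_i)+z_if_2(x_i)=0$ is exactly the vanishing of the determinant of the vectors $(y_i,z_i)$ and $(-f_2(x_i),f_1(x_i))$, so these are proportional; both are nonzero (by hypothesis on the points, and by $\gcd(f_1,f_2)=1$ respectively), so $(y_i,z_i)=\lambda_i(-f_2(x_i),f_1(x_i))$ for some nonzero $\lambda_i$. Differentiating the partial homogeneity relation $Q(x,\lambda y,\lambda z)=\lambda^{\ell}Q(x,y,z)$ in $y$ and $z$ gives $Q^{(a,b,c)}(x_i,-f_2(x_i),f_1(x_i))=\lambda_i^{-(\ell-b-c)}Q^{(a,b,c)}(x_i,y_i,z_i)$, which vanishes for $a+b+c<s$ by the multiplicity hypothesis on $Q$. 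Applying the (Hasse) chain rule to $R(x)=Q(x,-f_2(x),f_1(x))$ expresses $R^{(k)}(x_i)$ for $k<s$ as a polynomial in such partials of $Q$ together with derivatives of $f_1,f_2$ at $x_i$, so $R^{(k)}(x_i)=0$ for every $k<s$.

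Combining the two, $R$ has at least $s\tau$ zeros counted with multiplicity but degree strictly less than $s\tau$, and must vanish identically. The step I expect to need the most care in writing up is the transfer of multiplicity between the proportional points $(x_i,y_i,z_i)$ and $(x_i,-f_2(x_i),f_1(x_i))$: the exponent $\ell-b-c$ appearing when differentiating the homogeneity relation is exactly what makes all the relevant partial derivatives vanish uniformly, and in particular it lets the ``point at infinity'' case $z_i=0$ be handled without any separate projective argument.
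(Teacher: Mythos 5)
Your proof is correct, and it is essentially the argument the paper has in mind: the paper omits the proof, citing it as a generalisation of Lemma 4 of Guruswami--Sudan (via Trifonov's projective formulation), and your degree-versus-multiplicity count on $R(x)=Q(x,-f_2(x),f_1(x))$, with the homogeneity relation transferring the multiplicity-$s$ vanishing to the proportional point (thereby handling $z_i=0$ uniformly) and Gauss's lemma giving divisibility in $\F[x][y,z]$, is exactly that generalisation.
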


As with the GSA, such a trivariate polynomial can be constructed by
setting up and solving a system of linear equations. Each point to go through
with multiplicity $s$ corresponds to a similar requirement in a bivariate
polynomial (see e.g. \cite[Lemma 1]{trifonovWu10}), and therefore gives rise to
$\half s(s+1)$ linear equations, so the total number of equations is given by
$\half ns(s+1)$. The number of coefficients of $Q$ -- and therefore variables
of the equation system -- is at least $\sum_{i=0}^\ell s\tau-iw_2 -
(\ell-i)w_1$; it is exactly this whenever all the terms in the sum are
non-negative, but it can actually be more when some of them are negative.
Expanding and collecting, we therefore have that at least any $n, \tau, w_1,
w_2, \ell, s$ which satisfy:
\begin{equation}\label{eqn:ratPermissible}
    \half ns(s+1) < s\tau(\ell+1) - \half\ell(\ell+1) (w_1+w_2)
\end{equation}
allow for a construction of a satisfactory $Q$.

It is easy to see that $Q$ can have at most $\ell$ factors of the form given in
the theorem, as its $y$-degree is $\ell$. For this reason, particularly
inspired by its use for decoding and in concordance with the GSA, it is called
the \emph{(designed) list size}.

We are mostly interested in knowing for which values of $n, \tau$ and $w_1,w_2$
we can select $s$ and $\ell$ such that the above is satisfied. For rational
interpolation in general, a selection yielding a minimal $\ell$ is done in \cite{trifonov12}, so we will not repeat it here. In sections \ref{sec:grs} and \ref{sec:goppa} we use rational interpolation for decoding, and we will show a relation between the possible choices of parameters for these instances and
similar instances of polynomial interpolation using the GSA respectively
GSA+KV; this turns out to immediately give us bounds on $\tau$ as well as
values for $s$ and $\ell$.

Theorem \ref{thm:interpol} parallels a result for polynomial interpolation
as used in the GSA, see e.g.~\cite[Lemma 5]{guruSudan99}.
However, for the application of decoding, it is not quite enough; when we later
need to solve a rational interpolation problem for decoding, we seek $f_1$ and
$f_2$ which interpolate the error positions, and therefore an unknown number of
points, but their maximal degrees increase with the number of points they
interpolate. This means that we can't use Theorem \ref{thm:interpol} directly:
setting $\tau$ low while the allowed degrees of $f_1, f_2$ high would not allow
us to construct $Q$, while setting $\tau$ high would not guarantee that we
found $f_1$ and $f_2$ when only few points were interpolated. Luckily, we have
the following lemma which says that the $Q$ we construct for high $\tau$ will
also find $f_1$ and $f_2$ that interpolate fewer points, as long as their
degrees decrease appropriately:
\begin{lemma}\label{lem:risingRat}
  Let $Q(x,y,z)$ satisfy the requirements of Theorem \ref{thm:interpol} for
  some $(\tau, \ell, s, w_1, w_2)$. Then $Q(x,y,z)$ also satisfies the
  requirements for $(\tilde \tau, \ell, s, \tilde w_1, \tilde w_2)$ as
  long as
  \vspace*{-.3em}
  \[\min\{
  w_1 - \tilde w_1 \ , \ w_2 - \tilde w_2 \} \geq \tfrac s \ell (\tau
  - \tilde \tau)
  \]
  \vspace*{-1.7em}
\end{lemma}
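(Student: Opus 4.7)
The plan is to observe that three of the four conditions in Theorem~\ref{thm:interpol} carry over from $(\tau,\ell,s,w_1,w_2)$ to $(\tilde\tau,\ell,s,\tilde w_1,\tilde w_2)$ for free, and that the only nontrivial condition to check is the weighted-degree bound $\wdeg{1,\tilde w_2,\tilde w_1}Q < s\tilde\tau$. The hypothesis of $Q$ being non-zero and partially homogeneous of $(y,z)$-degree $\ell$ is a purely structural property of $Q$, and the vanishing of $Q$ with multiplicity $s$ at the points $(x_i,y_i,z_i)$ depends only on the points themselves, so neither is affected by changing $\tau$ or the weights.

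So the work reduces to bounding the new weighted degree. First I would pick an arbitrary monomial $\alpha x^i y^j z^h$ occurring in $Q$. Since $Q$ is partially homogeneous of degree $\ell$ in $(y,z)$, we have $j+h=\ell$ exactly; this is the key identity that makes the promised factor of $s/\ell$ work out. Writing $\Delta = \tau - \tilde\tau$, the hypothesis gives $\tilde w_1 \leq w_1 - s\Delta/\ell$ and $\tilde w_2 \leq w_2 - s\Delta/\ell$, so
\begin{IEEEeqnarray*}{rCl}
  i + j\tilde w_2 + h\tilde w_1
    &\leq& i + jw_2 + hw_1 - (j+h)\tfrac{s\Delta}{\ell} \\
    &=&    i + jw_2 + hw_1 - s\Delta.
\end{IEEEeqnarray*}
Using the original bound $i+jw_2+hw_1 < s\tau$ coming from $\wdeg{1,w_2,w_1}Q < s\tau$, the right-hand side is strictly less than $s\tau - s\Delta = s\tilde\tau$. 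Since this holds for every monomial of $Q$, we conclude $\wdeg{1,\tilde w_2,\tilde w_1}Q < s\tilde\tau$, which is the remaining requirement of Theorem~\ref{thm:interpol}.

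There is no real obstacle here; the only subtlety is spotting that partial homogeneity of $Q$ forces $j+h=\ell$ for every monomial, which is precisely what converts the per-variable weight decrease of $s\Delta/\ell$ into a uniform decrease of $s\Delta$ in the weighted degree, matching the decrease $s(\tau-\tilde\tau)$ on the right-hand side of the bound.
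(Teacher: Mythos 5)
Your proposal is correct and follows essentially the same route as the paper's own proof: only the weighted-degree condition needs re-checking, and the partial homogeneity ($j+h=\ell$ for every monomial) converts the per-weight decrease of at least $\tfrac s\ell(\tau-\tilde\tau)$ into a decrease of at least $s(\tau-\tilde\tau)$ in $\wdeg{1,\tilde w_2,\tilde w_1}Q$, exactly as in the paper (which phrases the same estimate via a minimum over the homogeneity classes rather than monomial by monomial).
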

\begin{proof}
Since the interpolation points and multiplicity as well as the list size have not
changed, we only need to show $\wdeg{1,\tilde w_2,\tilde w_1}Q < s \tilde \tau$ We have:
\begin{IEEEeqnarray*}{rCl}
\wdeg{1,\tilde w_2,\tilde w_1}Q &\leq& \wdeg{1,w_2,w_1}Q \\
    \IEEEeqnarraymulticol{3}{r}{
        \quad - \min\{ i(w_2 - \tilde w_2) + (\ell-i)(w_1 - \tilde w_1)
            \mid 0\leq i \leq \ell \}} \\
 &<& s\tau - \ell \min\{w_1 - \tilde w_1,\ w_2 - \tilde w_2 \}
\end{IEEEeqnarray*}

Therefore $Q$ satisfies the degree constraints whenever
\begin{IEEEeqnarray*}{rCl+c}
 s\tau - \ell \min\{w_1 - \tilde w_1,\ w_2-\tilde w_2 \} &\leq& s\tilde \tau
    & \iff \\
 \min\{w_1 - \tilde w_1,\ w_2 - \tilde w_2 \} &\geq&
    \frac s \ell (\tau - \tilde \tau)
\vspace*{-2.2\parskip}
\end{IEEEeqnarray*}
\end{proof}
\vspace*{-\parskip}

\subsection{Fast interpolation}\label{sec:ratFast}

As mentioned, the interpolation polynomial $Q(x,y,z)$ can be constructed by
setting up and solving a linear system of equations. However, without more
thought, this would have a cubic running time in the number of system equations, which is prohibitively slow. In this section, we describe a fast way to
construct the polynomial, building heavily upon ideas from the similar problem
in the GSA, in particular Lee and O'Sullivan \cite{leeOSullivan08} and the
subsequent refinement in Beelen and Brander \cite{beelenBrander}.

In the context of Theorem \ref{thm:interpol}, consider given values of the
parameters. We will assume that $\ell \geq s$; in later sections where we apply
rational interpolation, this turns out always to be the case. Consider now the
set $W \subset \F[x,y,z]$ consisting of \emph{all} polynomials homogeneous of
degree $\ell$ in $y$ and $z$, and which interpolate the $n$ points
$\{(x_0,y_0,z_0), \ldots, (x_{n-1}, y_{n-1}, z_{n-1})\}$, each with
multiplicity at least $s$. Our goal is then to find a non-zero $Q \in W$ of
lowest
possible $(1,w_2,w_1)$-weighted degree. It is easy to see that $W$ is an
$\F[x]$-module. The
approach is to give an explicit basis for $W$, represent this basis as a matrix
over $\F[x]$ and then use an off-the-shelf algorithm for finding the
``shortest'' vector in that matrix, ``short'' being defined appropriately. This
will correspond to a satisfactory interpolation polynomial.

Let us assume without loss of generality that each $z_i \in \{0,1\}$ and define $L =  \{ x_i | z_i=0 \}$. Define
the following polynomials which will turn out to play a crucial role: $R_y(x)$
and $R_z(x)$ will be the Lagrange polynomials interpolating $(x_i, y_i)$
respectively $(x_i, z_i)$, $i=0,\ldots,n-1$. Define also $G(x) =
\prod_{i=0}^{n-1}(x-x_i)$ as well as $g_z(x) = \gcd(G, R_z) = \prod_{i\in L}(x-x_i)$. Now, there must
exist $\lambda_1(x),\lambda_2(x) \in \F[x]$ such that $g_z(x) =
\lambda_1(x)G(x) + \lambda_2(x)R_z(x)$. Define $\Upsilon(x) = \big(
\lambda_2(x)R_y(x) \modop G(x) \big)$, considered in $\F[x]$. Note that
$\Upsilon(x_i) = \lambda_2(x_i)y_i$ for all $i=0,\ldots,n-1$. We begin with a
small lemma:

\begin{lemma}\label{lem:fastIntergz}
  Let $P(x,y,z) \in W$ and $P(x,y,z) = \sum_{j=0}^\ell P_j(x)y^jz^{\ell-j}$.
  Then $g_z(x)^{j-(\ell-s)} \mid P_j(x)$ for $j=\ell-s+1,\ldots, \ell$.
\end{lemma}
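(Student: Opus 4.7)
The plan is to exploit, for each $x_i$ with $i\in L$, the multiplicity-$s$ vanishing of $P$ at the (partially) projective point $(x_i,1,0)$, and translate that condition directly into divisibility statements on the coefficients $P_j(x)$. Since the factors $(x-x_i)$ for $i\in L$ are pairwise coprime, the divisibility by each of them assembles into divisibility by the appropriate power of $g_z(x)$.

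Concretely, I would first recall that since $z_i\in\{0,1\}$ and $(y_i,z_i)\neq(0,0)$, every $i\in L$ has $y_i=1$ and $z_i=0$. Dehomogenising $P$ by setting $y=1$ gives the bivariate polynomial
\[
  P(x,1,z) \;=\; \sum_{j=0}^{\ell} P_j(x)\,z^{\ell-j},
\]
and $P$ vanishing with multiplicity $s$ at the projective point $(x_i,1,0)$ is equivalent to $P(x,1,z)$ vanishing with multiplicity $s$ at the affine point $(x_i,0)$.

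Next, I would expand each $P_j(x)$ around $x_i$ as $P_j(x)=\sum_{r\geq 0} c^{(i)}_{r,j}(x-x_i)^r$. Substituting, the monomial $(x-x_i)^r z^{\ell-j}$ appears in $P(x,1,z)$ with coefficient $c^{(i)}_{r,j}$. The multiplicity-$s$ vanishing at $(x_i,0)$ forces $c^{(i)}_{r,j}=0$ for all $r+(\ell-j)<s$, i.e.\ for $r<j-(\ell-s)$. Hence $(x-x_i)^{j-(\ell-s)}\mid P_j(x)$ whenever $j>\ell-s$.

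Finally, this divisibility holds simultaneously for all $i\in L$; the factors $(x-x_i)$ are distinct linear, hence pairwise coprime, so their product raised to the power $j-(\ell-s)$ divides $P_j(x)$. Since $g_z(x)=\prod_{i\in L}(x-x_i)$, this gives $g_z(x)^{j-(\ell-s)}\mid P_j(x)$ for $j=\ell-s+1,\ldots,\ell$. The only mildly subtle point is the translation between projective multiplicity at $(x_i,1,0)$ and affine multiplicity of $P(x,1,z)$ at $(x_i,0)$; this follows from the standard definition of multiplicity via the Hasse derivatives applied to the chosen affine chart where $y\neq 0$, so I would cite (or briefly justify via \cite[Lemma~1]{trifonovWu10}) that fact rather than reprove it.
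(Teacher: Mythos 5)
Your proof is correct and takes essentially the same route as the paper's: there the multiplicity condition is used via the shift $P(x+x_i,y+y_i,z)$, the distinct $z$-degrees rule out cancellation, and extracting the $y^0$ part of $(y+y_i)^j$ is exactly your dehomogenisation to the chart where $y$ is nonzero, after which the divisibilities $(x-x_i)^{j-(\ell-s)}\mid P_j(x)$ are glued over $i\in L$ by coprimality, just as you do. The only cosmetic difference is your normalisation $y_i=1$ (the paper only uses $y_i\neq 0$), which is harmless by homogeneity of $P$ in $(y,z)$.
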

\begin{proof}
  As $P$
  interpolates the points $(x_i,y_i,z_i)$ with multiplicity $s$, $P(x+x_i,
  y+y_i, z+z_i)$ can have no monomials of total degree (in $x,y$ and $z$) less
  than $s$. For $x_i \in L$ we have $P(x+x_i, y+y_i, z+z_i) = \sum_{j=0}^\ell
  P_j(x+x_i)(y+y_i)^jz^{\ell-j}$. All the terms in the sum have different
  $z$-degree, so nothing between these terms cancels, and so each can have no
  monomials of total degree less than $s$. Multiplying out the power of $y+y_i$ reveals in particular that $P_j(x+x_i)y_i^jz^{\ell-j}$ has no monomials of degree less than $s$. Since $z_i=0$ we have
  $y_i \neq 0$, so for $j=\ell-s+1,\ldots,\ell$ we get $x^{j-(\ell-s)} \mid P_j(x+x_i)$. This implies the sought.
\end{proof}

The main result is the basis for $W$; it looks complicated, but the important
thing is that it is directly calculable given the rational interpolation
problem. We introduce for any $x\in \RR$ the function $\pos x := \max(x,0)$.
Note the easy identity $\pos x - \pos{{-x}} = x$. For the proof, we also use the
phrase ``leading monomial' of a trivariate polynomial $P(x,y,z)$ as the
monomial of highest $y$-degree when $P$ is regarded over $\F[x][y,z]$, and the
``leading coefficient'' is the $\F[x]$-coefficient of the leading monomial.

\begin{theorem}\label{thm:interpolBasis}
Let for $j=0,\ldots,\ell$
  \begin{IEEEeqnarray*}{rl}
    B\ti j\!=\ &(g_zy - \Upsilon z)^{\pos{s-j}}
              (yz - R_yz^2)^{j - \pos{j-(\ell-s)} - \pos{j-s}} \\
            &\quad (z\tfrac G {g_z})^{\pos{j-(\ell-s)}} 
              y^{\pos{\ell-s-j}} z^{\pos{j-s}}
  \end{IEEEeqnarray*}
Then $W = \big[B\ti 0, \ldots, B\ti \ell]$.
\end{theorem}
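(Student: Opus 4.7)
The plan is to prove the two inclusions $[B\ti 0, \ldots, B\ti \ell] \subseteq W$ and $W \subseteq [B\ti 0, \ldots, B\ti \ell]$ separately, the first by a direct interpolation check on each generator $B\ti j$ and the second by a determinant comparison of two free $\F[x]$-modules of rank $\ell+1$.

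For the forward inclusion, I would examine the three non-trivial $(y,z)$-homogeneous factors of $B\ti j$: $M_z := g_z y - \Upsilon z$, the factor $yz - R_y z^2$, and $M_0 := zG/g_z$. Each vanishes to order at least one at every interpolation point $(x_i, y_i, z_i)$: for $M_z$ one splits on whether $z_i = 0$ or $z_i = 1$ and uses $\Upsilon(x_i) = \lambda_2(x_i) y_i$ together with $g_z = \lambda_1 G + \lambda_2 R_z$ (so $g_z(x_i) = \lambda_2(x_i)$ when $z_i = 1$, and $g_z(x_i) = 0$ when $z_i = 0$); vanishing of the other two factors is immediate in each case. Summing exponents, the multiplicity of $B\ti j$ at any interpolation point is at least
\[
  \pos{s-j} + \bigl(j - \pos{j-(\ell-s)} - \pos{j-s}\bigr) + \pos{j-(\ell-s)} = \pos{s-j} + j - \pos{j-s} = s,
\]
using the identity $\pos u - \pos{-u} = u$; the remaining monomial factors $y^{\pos{\ell-s-j}} z^{\pos{j-s}}$ only raise multiplicity. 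A parallel telescoping of exponents shows the total $(y,z)$-degree of $B\ti j$ is exactly $\ell$, so $B\ti j \in W$.

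For the reverse inclusion, I would compute leading terms. Writing $B\ti j = c_j(x)\, y^{\ell-j} z^j + (\textrm{lower $y$-degree})$, one reads off $c_j(x) = g_z^{\pos{s-j}} (G/g_z)^{\pos{j-(\ell-s)}}$. Because the leading $y$-degrees $\ell, \ell-1, \ldots, 0$ of $B\ti 0, \ldots, B\ti \ell$ are pairwise distinct, the $\F[x]$-matrix expressing them in the standard basis $\{y^m z^{\ell-m}\}_{m=0}^\ell$ of the free $\F[x]$-module $V := \{P \in \F[x,y,z] : P \text{ is $(y,z)$-homogeneous of degree } \ell\}$ is triangular with diagonal $c_0, \ldots, c_\ell$. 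Thus $[B\ti 0, \ldots, B\ti \ell]$ is a free submodule of rank $\ell+1$ whose determinant (up to units) is
\[
  \prod_{j=0}^\ell c_j(x) = g_z^{s(s+1)/2}\,(G/g_z)^{s(s+1)/2} = G^{s(s+1)/2},
\]
since $\sum_j \pos{s-j} = \sum_j \pos{j-(\ell-s)} = s(s+1)/2$ under the standing assumption $\ell \geq s$. Consequently $\dim_\F V/[B\ti 0, \ldots, B\ti \ell] = n s(s+1)/2$.

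The main obstacle will be to show that $\dim_\F V/W = n s(s+1)/2$ as well. I would decompose $W = \bigcap_{i=0}^{n-1} W_i$, where $W_i \subseteq V$ is the submodule imposing the multiplicity-$s$ condition at only $(x_i, y_i, z_i)$. Each $V/W_i$ is annihilated by $(x-x_i)^s$ (since $(x-x_i)^s V \subseteq W_i$ by inspection of any shifted polynomial) and has $\F$-dimension $s(s+1)/2$, the number of monomials of total degree less than $s$ in two local variables; surjectivity of the jet evaluation map onto this space can be verified by exhibiting polynomials of the form $(x-x_i)^a (y - y_i z)^b z^{\ell-b}$ and their analogue for $z_i = 0$. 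Because the ideals $(x-x_i)^s$ are pairwise coprime in $\F[x]$, Bezout gives $W_i + W_j = V$ for $i \neq j$, and the Chinese Remainder Theorem for modules yields $V/W \cong \bigoplus_i V/W_i$, so $\dim_\F V/W = n s(s+1)/2$. Combined with the surjection $V/[B\ti 0, \ldots, B\ti \ell] \twoheadrightarrow V/W$ induced by the already-proved inclusion, equality of $\F$-dimensions forces this surjection to be an isomorphism, so $W = [B\ti 0, \ldots, B\ti \ell]$.
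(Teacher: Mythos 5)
Your proof is correct, but its second half takes a genuinely different route from the paper. The forward inclusion (each $B\ti j$ lies in $W$) is essentially the paper's argument: the same check that $g_zy-\Upsilon z$, $yz-R_yz^2$ and $zG/g_z$ all vanish at every point, and the same telescoping of exponents via $\pos{u}-\pos{-u}=u$ to get multiplicity $s$ and $(y,z)$-degree $\ell$. For the reverse inclusion, however, the paper runs an explicit multivariate division argument: it divides an arbitrary $P\in W$ successively by $B\ti 0,\ldots,B\ti \ell$ (with a case split on $\ell-s>s$ versus $\ell-s\le s$), using Lemma~\ref{lem:fastIntergz} and an analogous divisibility argument at the points with $z_i\neq 0$ to guarantee that each leading coefficient of the intermediate remainders is divisible by the leading coefficient $g_z^{\pos{s-j}}(G/g_z)^{\pos{j-(\ell-s)}}$ of the next basis element. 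You instead compare $\F$-codimensions inside the free module $V$ of rank $\ell+1$: triangularity gives $\deg\det = \deg G^{s(s+1)/2} = ns(s+1)/2$ for the module generated by the $B\ti j$, and a local analysis (jet space of dimension $s(s+1)/2$ at each point, annihilation by $(x-x_i)^s$, CRT over the distinct $x_i$) gives the same codimension for $W$, forcing equality. This buys you freedom from Lemma~\ref{lem:fastIntergz} and from the case analysis, at the price of needing the multiplicity condition at a single point to impose \emph{exactly} $s(s+1)/2$ independent conditions -- the independence (your jet-surjectivity via $(x-x_i)^a(y-y_iz)^bz^{\ell-b}$, with $b<s\le\ell$) is the extra ingredient beyond the upper bound the paper already cites when counting equations for \eqref{eqn:ratPermissible} -- together with standard PID facts (Smith normal form, degree of the determinant). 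One small point of precision: pairwise comaximality $W_i+W_j=V$ is not by itself the hypothesis of the module CRT; you should invoke the comaximality of $W_i$ with $\bigcap_{j\neq i}W_j$, which follows immediately from your observation that $(x-x_j)^sV\subseteq W_j$, since $(x-x_i)^s$ is coprime to $\prod_{j\neq i}(x-x_j)^s$. With that phrasing adjusted, the argument is complete and arguably cleaner, though less constructive than the paper's division-based proof, which directly exhibits the $\F[x]$-combination expressing $P$ in the basis.
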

\begin{proof}
First, it should be proved that each $B\ti j$ is of total
degree $\ell$ in $y$ and $z$. By summing all the terms' exponents, counting
each $yz-R_yz^2$ twice, and using the identity for $\pos {\cdot}$ given
above, one sees this is so.

To show that each $B\ti j$ is in $W$, note first that $yz-R_yz^2$
interpolates all $(x_i,y_i,z_i)$. This is also true for $z\tfrac G {g_z}$ and
$g_zy - \Upsilon z$, since either $z_i=0$, whereby they obviously both evaluate
to 0, or $x_i \notin L$ which gives $\tfrac {G(x)}{g_z(x)}|_{x=x_i} = 0$ as
well as
\begin{IEEEeqnarray*}{rCl}
    g_z(x_i)y_i - \Upsilon(x_i)z_i &=& (\lambda_1(x_i)G(x_i)+\lambda_2(x_i))y_i
                                        - \lambda_2(x_i)y_i 
                                 \\&=& 0
\end{IEEEeqnarray*}
For each $B\ti j$ to interpolate the points with multiplicity at least $s$, we
need only to verify that the sum of the exponents of the three terms $g_zy -
\Upsilon z$, $yz-R_yz^2$ and $z\tfrac G {g_z}$ is at least $s$; this is quickly seen to be true.

We need then only to show that any $P \in W$ can be expressed as an
$\F[x]$-combination of the $B\ti j$. There are two cases to
consider, $\ell-s \leq s$ and $\ell-s > s$. We will only show the latter case,
and
the former follows similarly. So assume $\ell-s > s$. Observe that $B\ti j$ has
$y$-degree exactly $\ell-j$. The proof now basically follows the multivariate
division algorithm on $P$ under lexicographical ordering $y > z > x$;
i.e.~dividing
with the aim of lowering the $y$-degree.

First observe that the leading coefficient of $B\ti 0$ is $g_z(x)^s$. By Lemma
\ref{lem:fastIntergz}, we can perform polynomial division of $P$ by $B\ti 0$
and get a remainder $P\ti 1(x,y,z)$ of $y$-degree at most $\ell-1$. As
$B\ti 0 \in W$ so is $P\ti 1 \in W$. We can continue as such with $B\ti j$ for
$j=1,2,\ldots,s-1$, as each of these $B\ti j$ has leading coefficient
$g_z(x)^{s-j}$ and Lemma \ref{lem:fastIntergz} promises that the remainders
will keep having leading term divisible by exactly this. We thus end with a
remainder $P\ti{s}$ with $y$-degree at most $\ell-s$ and in $W$.

As $\ell-s > s$ then for $j=s,\ldots,\ell-s$ we have $B\ti j(x,y,z) =
(yz-R_yz^2)^sy^{\ell-s-j}z^{j-s}$. They all have leading coefficient $1$, so we
can reduce $P\ti s$ with $B\ti s$, reduce the remainder of that with
$B\ti{s+1}$ and so forth, until we arrive at a remainder $P\ti{\ell-s+1}$ with
$y$-degree at most $s-1$.

Still we have $P\ti{\ell-s+1} \in W$ so the $(x_i,y_i,z_i)$ are all zeroes with
multiplicity $s$. Therefore $P\ti{\ell-s+1}(x+x_i,y+y_i,z+z_i)$ has no
monomials of degree less than $s$. Let $\rev L = \{x_i | z_i \neq 0\}$ and let
$P\ti{\ell-s+1}(x,y,z) = \sum_{j=0}^{s-1}P\ti{\ell-s+1}_j(x) y^j z^{\ell-j}$.
For $x_i \in \rev L$, we see by expanding the powers of both $y+y_i$ and
$z+z_i$ that $P\ti{\ell-s+1}(x+x_i,y+y_i,z+z_i)$ has a monomial
$P\ti{\ell-s+1}_{s-1}(x+x_i) y^{s-1} z_i^{\ell-s}$ which does not cancel with
any other term. Therefore, $x \mid P\ti{\ell-s+1}_{s-1}(x+x_i) \iff (x-x_i)
\mid P\ti{\ell-s+1}_{s-1}$. Collecting for all $x_i \in \rev L$, we get $\frac
G {g_z} \mid P\ti{\ell-s+1}_{s-1}$. Note that, as $\ell-s > s$, then
$B\ti{\ell-s+1}(x)$ has leading coefficient $\frac G {g_z}$. Thus, we can
divide $P\ti{\ell-s+1}(x,y,z)$ by $B\ti{\ell-s+1}(x)$ and get remainder
$P\ti{\ell-s+2}$ of $y$-degree at most $s-2$.

Now, the exact same argument as above can be repeated for $P\ti{\ell-s+2}$, but
one finds that $(x-x_i)^2$ must divide the leading coefficient for each $x_i
\in \rev L$. Therefore, we can divide by $B\ti{\ell-s+2}$ whose leading
coefficient is $(\frac G {g_z})^2$. We can continue this way with all the
remaining $B\ti j$, until we find that the last remainder $P\ti \ell$ must be
divisible by $(\frac G {g_z})^sz^\ell = B\ti \ell$.
\end{proof}

With a concrete basis for $W$ in hand, we wish to find an element in $W$ with
lowest possible $(1,w_2,w_1)$-weighted degree.  Write the $B\ti j$ of Theorem
\ref{thm:interpolBasis} as $B\ti j(x,y,z) = \sum_{i=0}^\ell B\ti j_i(x)y^i
z^{\ell-i}$. Construct now the matrix
$\Pi \in \F[x]^{(\ell+1) \times (\ell+1)}$ where the $(j,i)$'th entry is $B\ti
j_i(x)$. The $B\ti j(x,y,z)$ thus constitute the rows of $\Pi$. In this manner,
we can represent any basis of $W$ as an $(\ell+1) \times (\ell+1)$ matrix, and
any $P \in W$ can be represented as a vector in the row span of such a basis
matrix.

Consider a vector $V$ in the row-span of $\Pi$, and denote by $|V| :=
\max_{V_j \neq 0}\{\deg V_j + jw_2+(\ell-j)w_1 \}$ where $V_j$ is the $j$'th
component of $V$. A shortest vector in $\Pi$ under this metric will
correspond to a polynomial in $W$ which has the lowest possible
$(1,w_2,w_1)$-degree. Any algorithm which can compute a shortest vector in
the row-span of an $\F[x]$-matrix under this metric will therefore be usable
to solve our problem.

The usual approach of such algorithms is to compute a so-called \emph{row
reduced} basis matrix, where the sum of the basis elements' lengths is minimal.
It is well known that the shortest vector in the row space will be present in
this reduced matrix, see e.g. \cite{lenstra85, alekhnovich05}. This problem is
widely studied and it has several different guises and names: Gr\"obner basis
reductions over free $\F[x]$-modules \cite{leeOSullivan08}, row reduction of
$\F[x]$-matrices \cite{giorgi03}, and basis reduction of $\F[x]$-lattices
\cite{mulders03}.

The fastest method in the literature for our purposes is due to Giorgi et al.
in \cite{giorgi03}. If $\theta$ is the highest degree of any polynomial in the
initial basis matrix, and the basis matrix is $\nu \times \nu$, then the
algorithm has complexity $O(\nu^\omega \theta \log^{O(1)}(\nu \theta))$, where
$O(\nu^\omega)$ is the complexity for multiplying two $\nu\times\nu$ matrices
with elements in $\F$. Trivially $\omega \leq 3$ but methods exist with $\omega
< 2.4$ \cite{coppersmith90}. To bound the running time of applying the
algorithm on our problem, we have the following:

\begin{lemma}\label{lem:PiEntries}
  In the context of Theorem \ref{thm:interpolBasis} and the discussion above,
  the entries of $\Pi$ all have degree at most $sn$.
\end{lemma}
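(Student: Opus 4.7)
The plan is to bound the $x$-degree of each $B^{(j)}$ itself, since the entries of $\Pi$ in row $j$ are exactly the $\F[x]$-coefficients of $B^{(j)}$ when expanded as $\sum_i B^{(j)}_i(x) y^i z^{\ell-i}$, and such a coefficient can have degree no larger than $\deg_x B^{(j)}$.

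First I would collect the $x$-degree bounds of the ingredients. The Lagrange polynomials $R_y(x)$ and $R_z(x)$ have degree at most $n-1$; $G(x)$ has degree exactly $n$; $g_z(x) = \prod_{i\in L}(x-x_i)$ has degree $|L|\le n$; the quotient $G/g_z$ has degree $n-|L|\le n$; and $\Upsilon(x) = \lambda_2(x)R_y(x) \bmod G(x)$ has degree less than $n$ by construction. Consequently each of the three nontrivial $x$-dependent factors of $B^{(j)}$ satisfies
\begin{IEEEeqnarray*}{rCl}
 \deg_x(g_zy-\Upsilon z) &\le& n,\\
 \deg_x(yz-R_yz^2) &\le& n-1,\\
 \deg_x(zG/g_z) &\le& n.
\end{IEEEeqnarray*}
The remaining factors $y^{\pos{\ell-s-j}}$ and $z^{\pos{j-s}}$ contribute nothing to the $x$-degree.

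Second, since $x$-degree is additive under products and multiplicative under powers, the weakest bound $n$ on each of the three factors gives
\begin{IEEEeqnarray*}{rCl}
 \deg_x B^{(j)} &\le& n\Bigl(\pos{s-j} + \bigl(j-\pos{j-(\ell-s)}-\pos{j-s}\bigr) \\
 && \qquad {}+\pos{j-(\ell-s)}\Bigr)\\
 &=& n\bigl(\pos{s-j} + j - \pos{j-s}\bigr).
\end{IEEEeqnarray*}
Finally I invoke the identity $\pos{x}-\pos{-x}=x$ already recorded before Theorem~\ref{thm:interpolBasis}, applied with $x=s-j$, to rewrite $\pos{s-j}-\pos{j-s}=s-j$. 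Substituting yields $\deg_x B^{(j)}\le n(j+(s-j)) = sn$, which finishes the argument.

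There is no real obstacle here; the only mildly delicate point is making sure the bookkeeping of the three exponents works uniformly without splitting into cases according to the signs of $s-j$, $j-s$, and $j-(\ell-s)$. The identity $\pos{x}-\pos{-x}=x$ is precisely what collapses all of these cases into a single line.
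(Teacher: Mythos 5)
Your argument is correct and is essentially the paper's: the key point in both is that the $x$-dependent factors of $B^{(j)}$ each have degree at most $n$ and their exponents sum to at most $s$ (collapsed via $\pos{x}-\pos{-x}=x$), giving the bound $sn$. The only cosmetic difference is that you bound $\deg_x B^{(j)}$ before expanding in $y,z$, whereas the paper bounds each expanded entry $\beta\, g_z^{j_1}\Upsilon^{j_2}R_y^{j_3}(G/g_z)^{j_4}$ directly.
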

\begin{proof}
  The entries of $\Pi$ are all of the form
  $\beta g_z^{j_1}\Upsilon^{j_2}R_y^{j_3}(\frac G {g_z})^{j_4}$ where $\beta
  \in \F$ and $j_1,j_2,j_3,j_4$ are non-negative integers summing to at most
  $s$. The lemma follows as the four base polynomials are each of degree at
  most $n$.
\end{proof}

The algorithm in \cite{giorgi03} does not directly support the different
``column weights'' that our vector metric demands, but this can be amended by
first multiplying the $j$'th column of $\Pi$ with $x^{jw_2+(\ell-j)w_1}$ and
then finding the usual row reduced basis. The powers of $x$ can then be divided
out from the resulting reduced basis afterwards. This does not change the
complexity of the algorithm whenever $w_1,w_2 \in O(n)$, which follows if we
assume $\tau^2 > n(w_1+w_2)$; an assumption which turns out to be true for our
applications in later sections. One
should also note that for finite fields $\F$, the
algorithm might need to calculate over an extension field, though without
affecting the asymptotic running time, as pointed out by Bernstein
\cite{bernstein11simplified}. This entire discussion can be distilled into the
following:

\begin{lemma}\label{lem:interpolComp}
  For given values of the parameters of Theorem \ref{thm:interpol} where
  $\ell \geq s$ and $\tau^2 > n(w_1+w_2)$, an algorithm exists to find a
  satisfactory interpolation polynomial in complexity $O(\ell^\omega s
  n\log^{O(1)}(\ell n))$.
\end{lemma}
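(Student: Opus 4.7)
The plan is to assemble the three ingredients already in place: the explicit module basis for $W$ from Theorem \ref{thm:interpolBasis}, the degree bound on its entries from Lemma \ref{lem:PiEntries}, and the fast polynomial-matrix row-reduction algorithm of Giorgi et al.~\cite{giorgi03}. First, construct the $(\ell+1)\times(\ell+1)$ matrix $\Pi$ over $\F[x]$ whose rows are the coefficient vectors of $B\ti 0,\dots,B\ti \ell$; this construction requires only polynomial interpolation, a single extended Euclidean computation for the $\lambda_i$'s, and polynomial multiplications, all of which fall well inside the target complexity budget and can be absorbed.

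Next, reduce the weighted problem to a standard (unshifted) row reduction. Concretely, multiply the $j$'th column of $\Pi$ by $x^{jw_2+(\ell-j)w_1}$; a non-zero row of least $x$-degree in the row span of the resulting matrix $\tilde \Pi$ corresponds, after dividing out the column shifts, to an element of $W$ of least $(1,w_2,w_1)$-weighted degree, which is precisely the interpolation polynomial sought. Since such an element lies in every row-reduced basis of $\tilde \Pi$, it suffices to compute any such reduction.

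Finally, invoke Giorgi et al.'s algorithm on $\tilde \Pi$. That algorithm runs in $O(\nu^\omega \theta \log^{O(1)}(\nu\theta))$ for an $\nu\times\nu$ matrix of entry-degree at most $\theta$. Here $\nu = \ell+1$, and by Lemma \ref{lem:PiEntries} the entries of $\Pi$ have degree at most $sn$; the shift contributes at most $\ell\max(w_1,w_2)$, so I need $\theta = O(sn)$ after shifting. Combining the feasibility inequality \eqref{eqn:ratPermissible} (which bounds $\ell(\ell{+}1)(w_1+w_2)/2$ by $s\tau(\ell+1) \leq s n(\ell+1)$, hence $\ell(w_1+w_2) = O(sn)$) with $\ell \geq s$ and the hypothesis $\tau^2 > n(w_1+w_2)$ keeps the shifted degree at $O(sn)$, yielding the claimed $O(\ell^\omega s n \log^{O(1)}(\ell n))$. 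Over a finite field one may need to work in an extension to run Giorgi et al.'s subroutine, but as Bernstein \cite{bernstein11simplified} observes, this does not change the asymptotic cost.

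The main obstacle is the last bookkeeping step: making rigorous that the column shifts, which naively could blow the degree of $\tilde \Pi$ up to $O(\ell n)$ and cost an extra factor of $\ell/s$, actually stay within $O(sn)$ under the stated hypotheses. This reduces to extracting a sharp bound on $w_1+w_2$ from the feasibility inequality \eqref{eqn:ratPermissible} rather than from the weaker $w_1+w_2 = O(n)$; once this is in hand, the remainder of the argument is a direct plug-in.
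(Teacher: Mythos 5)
Your proposal follows the paper's proof essentially verbatim: build $\Pi$ from the basis of Theorem \ref{thm:interpolBasis}, bound its entry degrees by $sn$ via Lemma \ref{lem:PiEntries}, emulate the column weights by multiplying column $j$ by $x^{jw_2+(\ell-j)w_1}$, and run the row-reduction algorithm of Giorgi et al.~\cite{giorgi03} (over an extension field if necessary, as in \cite{bernstein11simplified}), with the construction of $\Pi$ itself absorbed into the budget just as the paper argues in more detail ($O(\ell^2 M(sn))$ for the $O(\ell^2)$ entries after computing the needed powers). The one point where you are more careful than the paper is the degree of the shifted matrix: the paper dismisses the column shifts with the remark that $w_1,w_2\in O(n)$ (which by itself only yields a shift of $O(\ell n)$), whereas your bound $\ell(w_1+w_2)<2s\tau\le 2sn$ extracted from \eqref{eqn:ratPermissible} is exactly what is needed to keep the entry degrees in $O(sn)$ and hence obtain the stated $O(\ell^\omega s n\log^{O(1)}(\ell n))$.
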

\begin{proof}
  As soon as one has constructed $\Pi$, the result follows from Lemma
  \ref{lem:PiEntries} and the complexity of the algorithm in \cite{giorgi03},
  so we just need to show that we can compute $\Pi$ in the given speed. Let
  $M(\theta)$ be the complexity of multiplying two polynomials of degree
  $\theta$. Computing $R_y, R_z$ and $G$ by Lagrangian interpolation can be
  done in complexity $O(M(n)\log n)$, see e.g. \cite[p. 235]{gathen}.
  $\Upsilon$ and $g_z$ can be computed using the Euclidean algorithm in
  $O(n\log^2 n)$. For a polynomial of degree $n$, computing all the first
  $s$ different powers of it can be done iteratively in $O(sM(sn))$. Each entry
  in $\Pi$ is a multiple of $g_z, R_y, \frac G {g_z}$ and
  $\Upsilon$ to a combined power of $s$, so after each of their $s$ powers have
  been computed, each of the $O(\ell^2)$ entries in $\Pi$ can be computed in
  $O(M(sn))$. Using Sch\"onhage-Strassen, we can set $M(\theta) = O(\theta\log
  \theta \log\log \theta)$, see e.g. \cite[Theorem 8.23]{gathen}, and inserting
  this into the above, we see that $\Pi$ can be computed in $O(\ell^2M(sn))
  \subset O(\ell^2sn\log^{O(1)}(\ell n))$.
\end{proof}

\Remark Another algorithm that can be used to handle the interpolation problem
is the row-reduction method of Alekhnovich \cite{alekhnovich05}, which also has
been used in the interpolation method by Beelen and Brander. 
\cite{beelenBrander}. This method would, however, yield the slightly worse running time $O(\ell^{\omega+1} s n \log^{2+o(1)}(\ell
n))$.
\RemarkEnd

After having computed the interpolation polynomial $Q(x,y,z)$, one needs to
find factors of the form $yf_1(x) + zf_2(x)$ with $f_1,f_2 \in \F[x]$. Any
such factor except $z$ will also occur as an $\F(x)$ factor in the
dehomogenised version of $Q$. Thus, any fast algorithm for computing this will
suffice. In \cite{wu08}, Wu describes an extension to the root-finding method
of Roth and Ruckenstein (RRR) \cite{rothRuckenstein00} for finding $\F(x)$
roots of a $\F[x][y]$ polynomial: he remarks that simply applying the original
RRR will find the truncated power series of each $\F(x)$ root; retrieving a
long enough such series and applying a Pad\'e approximation method like the BMA
or the EA will retrieve the polynomial fraction. A divide-and-conquer speed-up
of the RRR described by Alekhnovich in \cite[Appendix]{alekhnovich05} applies just
as well to this extension\footnote{%
  We are grateful to the anonymous reviewer for pointing out the extension of
  Alekhnovich to us as well as the improvement to its running time analysis.
}. We arrive at the following:

\begin{lemma}\label{lem:interpolFactor}
  In the context of Theorem \ref{thm:interpol}, there exists an algorithm which
  finds all factors of $Q(x,y,z)$ of the form $yf_1(x) + zf_2(x)$ in complexity
  $O\big(\ell^2sn\log(\ell n)^{2+o(1)}\big)$, assuming that the cardinality of $\F$ is in $O(n)$.
\end{lemma}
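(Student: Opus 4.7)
The plan is to reduce the factor-extraction problem to $\F(x)$-root-finding on the dehomogenisation $\hat Q(x,y) := Q(x,y,1) \in \F[x][y]$, then invoke a fast root-finder followed by Pad\'e approximation. The factor $z$ is degenerate and handled separately: $z \mid Q$ iff $Q(x,y,0) = 0$, which is detectable in negligible time. Every other factor $yf_1(x) + zf_2(x)$ becomes $yf_1(x)+f_2(x) = f_1(x)\bigl(y + f_2(x)/f_1(x)\bigr)$ on setting $z=1$, and modulo $\F(x)^{\times}$-scaling is in bijection with the $\F(x)$-roots $y = -f_2(x)/f_1(x)$ of $\hat Q$. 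Note that $\hat Q$ has $y$-degree at most $\ell$ and, by the weighted-degree bound in Theorem~\ref{thm:interpol}, $x$-degree at most $s\tau \le sn$.

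Next I would run the extended Roth--Ruckenstein procedure of Wu \cite{wu08}, accelerated by Alekhnovich's divide-and-conquer variant \cite[Appendix]{alekhnovich05}. The algorithm outputs, for each $\F(x)$-root $\phi(x)\in\F[[x]]$ of $\hat Q$, the truncation of $\phi$ modulo $x^D$ for a precision $D := w_1+w_2+1$. This is exactly the minimum needed for Pad\'e reconstruction: any two distinct admissible pairs $(f_1,f_2)$ with $\deg f_1\le w_1$ and $\deg f_2\le w_2$ must differ in at least one of the first $w_1+w_2+1$ power-series coefficients of $-f_2/f_1$. The standing assumption $\tau^2 > n(w_1+w_2)$ together with $\tau \le n$ forces $w_1,w_2 \in O(n)$, so $D \in O(n)$.

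For each of the at most $\ell$ truncated series produced by the RRR, I would then recover the pair $(f_1,f_2)$ by a single Pad\'e approximation of type $(w_1,w_2)$, which amounts to one run of the half-gcd Euclidean algorithm on $x^D$ and $\phi \bmod x^D$, halted when the remainder has degree $\le w_2$. Each such call costs $O\bigl(M(D)\log D\bigr) = O\bigl(n\log^{2+o(1)} n\bigr)$ with Sch\"onhage--Strassen arithmetic, so the total Pad\'e cost is $O\bigl(\ell n \log^{2+o(1)}(\ell n)\bigr)$, comfortably inside the target bound.

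The main obstacle I expect is establishing that the Alekhnovich-accelerated RRR, transplanted from the $\F$-root setting to Wu's truncated-power-series setting, enumerates the at-most-$\ell$ length-$D$ series roots in $O\bigl(\ell^2 sn \log^{2+o(1)}(\ell n)\bigr)$ operations in $\F$. Here the factor $sn$ absorbs the $x$-degree of $\hat Q$, while the hypothesis $|\F|\in O(n)$ is exactly what is needed to keep the cost of the per-node coefficient search inside the claimed budget, since each branching step of the RRR tree may in principle test up to $|\F|$ candidate coefficients; this is the step where the analysis has to be carried out with care, since all of the parameters $\ell$, $s$ and $n$ combine.
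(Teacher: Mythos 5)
Your overall route is the same as the paper's: dehomogenise, treat the factor $z$ separately, enumerate truncated power-series roots with Wu's extension of the Roth--Ruckenstein procedure accelerated by Alekhnovich's divide-and-conquer, and recover each pair $(f_1,f_2)$ by Pad\'e approximation via the EA; your accounting of the Pad\'e stage (at most $\ell$ calls at precision $O(n)$, total $O(\ell n\log^{2+o(1)}(\ell n))$) matches the paper. However, the step you explicitly defer --- showing that the accelerated RRR itself runs in $O(\ell^2 s n\log^{2+o(1)}(\ell n))$ --- is precisely the non-trivial content of the lemma, and it does not follow from citing Alekhnovich as you do: his stated bound is $O(\ell^{O(1)}\theta\log\theta)$ with an unspecified polynomial dependence on $\ell$, so one must reopen his recursion. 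The paper's proof does exactly this: it bounds the non-recursive work at each node by $\ell^2 M(\theta)$ (the cost of the $\ell$ shift computations $Q(x,y_i+x^{d_i}\hat y)$) and takes the base case to be fast factorisation over $\F[y]$, giving $f(1,\ell)=O(\ell M(\ell)\log(q\ell))$, whence $f(\theta,\ell)\in O(\ell^2 M(\theta)\log\theta+\theta\ell M(\ell)\log(q\ell))$ and, with $\theta\in O(sn)$ and $q\in O(n)$, the claimed bound. Without some such refinement your proposal asserts the conclusion rather than proving it.

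Relatedly, your suggested role for the hypothesis $|\F|\in O(n)$ is off: if each branching step of the RRR tree really tested up to $q$ candidate coefficients, the tree (with roughly $\ell$ branches maintained to precision $O(n)$) would already cost on the order of $\ell n q=\Theta(\ell n^2)$ operations, which exceeds the target bound whenever $\ell^2 s\log^{2+o(1)}(\ell n)=o(n)$. In the correct analysis the field size enters only logarithmically, through the $\log(q\ell)$ factor in the fast univariate root-finding/factoring at the base of the recursion; the assumption $q\in O(n)$ is there merely to absorb $\log q$ into $\log(\ell n)$, not to licence per-node exhaustive search. A smaller point worth a remark if you flesh this out: roots $-f_2/f_1$ with $f_1(0)=0$ are not power series at $x=0$, so the reduction to series roots needs the usual care (as handled in Wu's extension) before your Pad\'e step applies verbatim.
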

\begin{proof}
  The root-finding algorithm described in \cite{wu08} will have the complexity
  of running the RRR followed by at most $\ell$ applications of the EA, each on
  a truncated power series of degree $O(\tau) \subset O(n)$. The EA applications
  will have total complexity $O(\ell n\log^2 n\log\log n)$, see e.g. \cite[Chapter 8.9]{ahoDesignCS}.

  For running the RRR, Alekhnovich reports a complexity of $O(\ell^{O(1)}\theta
  \log \theta)$, where $\theta$ is the $x$-degree of $Q(x,y,z)$; however, his
  analysis can
  be improved: in the context of his proof, choose a fast factoring method over
  $\F[y]$, e.g. from \cite[Theorem 14.14]{gathen}, and so set $f(1, \ell) =
  O(\ell M(\ell)\log(q\ell))$, where $q$ is the cardinality of $\F$. The non-recursive cost of $f(\theta, \ell)$,
  i.e. the term $\ell^{O(1)}\theta$, can be improved to $\ell^2M(\theta)$,
  as an upper bound cost of the $\ell$ different calculations of the shifts
  $Q(x,y_i+x^{d_i}\hat y)$. Now the recursive bound has the improved solution
  $f(\theta,\ell) \in O(\ell^2M(\theta)\log \theta + \theta\ell
  M(\ell)\log(q\ell))$. We have $\theta \in O(sn)$ and assume $q \in O(n)$
  and thus arrive at the complexity of the lemma.
\end{proof}

An alternative factorisation method with roughly the same complexity is
proposed by Bernstein in \cite{bernstein11simplified} by accommodating a more
classical root finding method in $\ZZ[x]$ by Zassenhaus; see also \cite[Chapter
15]{gathen}.

\section{Wu list decoding for Reed-Solomon codes}\label{sec:grs}
\begingroup \newcommand{\lz}{{<_0}}

We can now derive the Wu list decoder in a succinct manner using the Euclidean
algorithm instead of the Berlekamp-Massey algorithm (BMA). This derivation is
inspired by Trifonov's derivation \cite{trifonovWu10}, though ours is slightly
more general and uses shorter polynomials in the computations.

\subsection{The codes}

An $[n,k,d]$ Generalised Reed-Solomon (GRS) code over a finite field $\F_q$
is the set
\[
    \big\{ \big(  v_0\eta(\alpha_0),\ldots, v_{n-1}\eta(\alpha_{n-1}) \big)
        \mid \eta \in \F_q[x] \land \deg \eta < k \big\}
\]
for some $n$ distinct non-zero $\alpha_0, \ldots, \alpha_{n-1} \in \F_q$ as
well as $n$ non-zero $v_0, \ldots, v_{n-1} \in \F_q$. The $\alpha_i$ are called
\emph{evaluation points} and the $v_i$ \emph{column multipliers}. It is easy to
show that $d = n-k+1$ and the code is therefore MDS. See e.g. \cite{rothBook}
for a comprehensive introduction to GRS codes.

Consider a sent codeword $c = (c_0, \ldots, c_{n-1})$ and a corresponding
received word $r = (r_0, \ldots, r_{n-1})$. Then the syndrome polynomial is
computable by the receiver and can be defined as
\begin{equation}\label{eqn:rsSyn}
    S(x) = \sum_{i=0}^{n-k-1} x^i\sum_{j=0}^{n-1} r_j\hat v_j\alpha_j^{d-2-i}
\end{equation}
where $\hat v_j = (v_j\prod_{h \neq j}(\alpha_j-\alpha_h))^{-1}$.
If we denote the set of error locations by $E$, that is, $E = \{ i \mid c_i
\neq r_i \}$, we can define the error-locator and error-evaluator polynomials
respectively, as follows\footnote{%
The reader familiar with the three polynomials might notice our slightly
unorthodox definition of them; many sources use an error-locator which reveals
the \emph{inverse} error positions, i.e. $\Lambda(\alpha_i^\mo) = 0$ iff the
$i$'th position is in error. This also yields a slightly simpler syndrome
polynomial. However, in the case of Goppa codes, the above definition of the
error locator is more natural, and we have opted for consistency in this
article by also using that here.
}:
\begin{IEEEeqnarray*}{rCl}
    \Lambda(x) &=& \prod_{i \in E} (x-\alpha_i) \\
    \Omega(x)  &=& -\sum_{i \in E} (r_i - c_i)\alpha_i^{d-1}\hat v_i
                    \prod_{j \in E \setminus \{i\}} (x-\alpha_j)
\end{IEEEeqnarray*}

Clearly, the receiver can quickly retrieve $c$ from $r$ if he constructs
$\Lambda$ and $\Omega$, as the error locations are the roots of $\Lambda$, and
the error values are the evaluations of $\Omega$ in the respective error
location (up to a calculable scalar). Note that therefore $\gcd(\Lambda,
\Omega) = 1$ as the elements of $E$ are all the zeroes of $\Lambda$ but
definitely not zeroes of $\Omega$. The three defined polynomials are related by
the famous
Key Equation (see e.g. \cite{rothBook} or \cite{zeh11interpol}):
\begin{equation}\label{eqn:RSkey}
    \Lambda(x)S(x) \equiv \Omega(x) \mod x^{d-1}
\end{equation}
Many decoding algorithms solve this equation for $\Lambda$ and $\Omega$, and
construct $c$ from $r$ using these. That is also what our list decoder will
do.

\subsection{The list-decoding algorithm}\label{sec:RSdecode}

Using the Key Equation and the results of Section \ref{sec:ratInter}, we can
construct a list decoder. By \eqref{eqn:RSkey} as well as \eqref{eqn:genKeyEq}
on page \pageref{eqn:genKeyEq} and the paragraphs following it, we know that
$\Omega(x) -
y\Lambda(x) \in M= [x^{d-1}, y - S(x)]$. If we run the EA on $x^{d-1}$ and $S$,
by Proposition \ref{prop:EAgrob}, we get a Gr\"obner basis $G = \{h_1, h_2\}$
of $M$ of module term order $<_\mu$ for any integer $\mu \geq 0$. We choose
$\mu=0$.

Let $\errs = |E|$ be the number of errors, unknown to the receiver. Then $\deg
\Omega < \deg \Lambda = \errs$. As $\deg \Lambda > \deg \Omega$, then
$y\Lambda(x) >_0 \Omega(x)$. Assume now that $\ydeg_\lz h_2 = 1$ (and therefore
$\ydeg_\lz h_1 = 0$). Therefore, by Proposition
\ref{prop:grobDegs}, we know there exist polynomials $f_1, f_2 \in \F[x]$ such
that
\begin{align}
    \Omega(x) - y\Lambda(x) &= f_1(x)h_1(x,y) + f_2(x)h_2(x,y) \notag \\ 
    \deg f_1 &\leq \errs - d + \xdeg_\lz(h_2) \label{eqn:RSfdeg}\\
    \deg f_2 &= \errs - \xdeg_\lz(h_2) \notag
\end{align}
We see that whenever $\errs \leq \lfloor \frac
{n-k} 2 \rfloor$, either the degree bound for $f_1$ or that for $f_2$ will be
negative, and that one will then be zero. Therefore $\Omega(x) - y\Lambda(x)$
will be a multiple of either $h_1$ or $h_2$. As $\ydeg_\lz(\Omega(x) -
y\Lambda(x)) = 1$, it must be a multiple of $h_2$. However, as $\Lambda$ and
$\Omega$ are coprime, that multiple must be the constant that normalises $h_2$
to have leading coefficient 1, just as $\Lambda(x)$. This corresponds to the
Sugiyama decoding algorithm \cite{sugiyama}.

In case neither $h_1$ nor $h_2$ is valid as $\Omega(x)-y\Lambda(x)$, we know
that $f_1$ and $f_2$ are non-zero, so there are more errors than half the
minimum distance; then we proceed exactly like regular Wu list decoding using
BMA. We know that for at least $\errs$ values of $x_0 \in \{ \alpha_0, \ldots,
\alpha_{n-1} \}$, we have $\Lambda(x_0) = 0$, namely the error locations.
Therefore, by \eqref{eqn:RSfdeg}, for at least those $\errs$ values of $x_0$,
we have $f_1(x_0)h_{11}(x_0) + f_2(x_0)h_{21}(x_0) = 0$. Thus, for this to be a
rational interpolation problem as in Section \ref{sec:ratInter}, we just need
to ascertain two properties: 1) that $h_{11}(x)$ and $h_{21}(x)$ never
simultaneously evaluate to zero since they are coprime, as a linear combination
of $h_1$ and $h_2$ equals $y - S(x) \in M$; 2) that $f_1$ and $f_2$ are coprime since $\Lambda$ and $\Omega$ are.

From the results developed in Section \ref{sec:ratInter}, we can therefore
solve this rational interpolation problem for certain values of $\ell$ as well
as the parameters $n$ and $d$: we
construct a partially homogeneous interpolation polynomial $Q(x,y,z)$ which has
zero at all the points $(\alpha_i, h_{11}(\alpha_i), h_{21}(\alpha_i))$ for
$i=0,\ldots,n-1$. Under certain constraints on the degrees of $Q(x,y,z)$, then
$yf_1(x) + zf_2(x)$ will be a factor of $Q(x,y,z)$. The following subsection
looks closer at the possible choice of parameters to derive the
upper bound on $\tau$. The complete list decoder is listed in Algorithm
\ref{alg:rs}. 

\Remark There is a duality between the GSA and the Wu
list decoder: in list decoding GRS codes with the GSA,
one sets up an interpolation problem where the sought solution -- the
information word -- will pass through those of the prescribed points that
correspond to the error-free positions. Oppositely, here we seek $f_1, f_2$
that pass through those of the prescribed points that correspond to the errors
positions.
\RemarkEnd

\subsection{Analysis of the parameters}\label{sec:grsParams}

It is clear that in Theorem \ref{thm:interpol}, we should set $w_1,w_2$ equal
to the bounds on $\deg f_1, \deg f_2$ in \eqref{eqn:RSfdeg} for the case $\eps
= \tau$; so $w_1+w_2 = 2\tau - d$.  The main question is then for which $\tau$ we can select
$\ell$ and $s$ such that \eqref{eqn:ratPermissible} is satisfied. Inserting
the value for $w_1+w_2$ and rearranging, \eqref{eqn:ratPermissible} becomes
\begin{equation}\label{eqn:grsPermissible}
  \frac \tau n < \frac 1 {(\ell+1)(\ell-s)}
  \left( \binom {\ell+1} 2 \frac d n - \binom {s+1} 2 \right)
\end{equation}
Replacing $s$ by $\ell-s$ this is exactly the equation governing the choice of
parameters $s, \ell$ and $\tau$ in the GSA for the same values of $n$ and $d$,
see e.g. \cite[Lemma 9.5]{rothBook}. This means that for all parameters of the
GSA where the multiplicity is less than $\ell$, this substitution applies,
giving valid parameters for Algorithm 1.\begin{footnote}{%
  We are grateful to the anonymous reviewer for pointing out this relation to
  us.}
We arrive at the following two lemmas:
\begin{lemma}
Algorithm \ref{alg:rs} can list decode for any $\tau < n - \sqrt{n(n-d)}$.
\end{lemma}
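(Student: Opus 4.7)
The plan is to exploit the correspondence, noted in the paragraph immediately above the lemma, between the Wu governing inequality \eqref{eqn:grsPermissible} and the governing inequality of the GSA. Writing $\tilde s = \ell - s$, one checks directly that \eqref{eqn:grsPermissible} becomes
\[
   \frac{\tau}{n} \ <\ \frac{1}{(\ell+1)\tilde s}\left(\binom{\ell+1}{2}\frac{d}{n} - \binom{\ell-\tilde s+1}{2}\right),
\]
which, after a trivial rewrite, is exactly the inequality describing admissible parameter triples $(\tau,\tilde s,\ell)$ for the Guruswami--Sudan algorithm applied to an $[n,k,d]$ GRS code, see e.g.\ \cite[Lemma 9.5]{rothBook}.

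From this I would invoke the classical fact that, for every $\tau < n - \sqrt{n(n-d)}$, one can choose positive integers $\tilde s$ and $\ell$ with $\tilde s \le \ell$ satisfying the GSA governing inequality; concrete admissible formulas are well known (e.g.\ $\tilde s = 1 + \lfloor \tau(k-1)/(n(d-\tau)/n \cdot \ldots )\rfloor$ type expressions in \cite{rothBook}). Setting $s = \ell - \tilde s$ in Algorithm \ref{alg:rs} and keeping the same $\ell$ then yields, by the displayed substitution, parameters that satisfy \eqref{eqn:grsPermissible}, and hence by Theorem \ref{thm:interpol} together with Lemma \ref{lem:risingRat} allow the algorithm to find $\Omega(x) - y\Lambda(x)$ whenever the number of errors $\eps$ is at most $\tau$.

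The only subtleties to check are auxiliary: that the resulting $s$ lies in the allowed range (clearly $0 \le s = \ell - \tilde s \le \ell - 1$, so in particular $\ell \ge s$ as required by Lemma \ref{lem:interpolComp}); and that the parameters pass the consistency assumption $\tau^2 > n(w_1+w_2)$ made for the fast interpolation. The latter follows from $w_1 + w_2 = 2\tau - d$, so the condition reads $\tau^2 > n(2\tau - d)$, i.e.\ $(\tau - n)^2 > n(n-d)$, which is equivalent to $\tau < n - \sqrt{n(n-d)}$ (discarding the branch $\tau > n + \sqrt{n(n-d)} > n$ as impossible). Thus the hypothesis of the lemma itself supplies what is needed.

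The main obstacle, if any, is purely bookkeeping: making sure that the substitution $s \mapsto \ell - s$ really sends the Wu inequality onto the \emph{same} GSA inequality appearing in \cite{rothBook} (conventions for $d$ versus $k$ and for $\binom{\cdot}{2}$ can differ by a factor), and that the admissible GSA parameters do not force the degenerate case $\tilde s = \ell$ (which would give $s = 0$). Both checks are routine expansions of the binomial coefficients in \eqref{eqn:grsPermissible}, so the proof reduces essentially to quoting the GSA parameter analysis through the stated change of variables.
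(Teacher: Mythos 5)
Your overall route is the paper's own: translate \eqref{eqn:grsPermissible} into the GSA governing inequality via the substitution $s \mapsto \ell - s$, import the classical existence of GSA parameters for every $\tau < n - \sqrt{n(n-d)}$, and finish with Theorem \ref{thm:interpol} and Lemma \ref{lem:risingRat}. However, the two points you set aside as ``routine bookkeeping'' are exactly where the paper's proof does its real work, and as written your argument has gaps there. First, the degenerate case $s_G = \ell$ (giving $s = 0$) is not disposed of by expanding binomial coefficients: for $\tau < d/2$ the GSA admits $s_G = \ell$ (e.g.\ $\ell = s_G = 1$), and nothing in \cite[Lemma 9.5]{rothBook} guarantees a choice with $s_G < \ell$ there; worse, in that regime $w_1 + w_2 = 2\tau - d < 0$, so the rational-interpolation step cannot even be set up with these weights. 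The paper resolves this by splitting off $\tau < d/2$, which is covered by the minimum-distance decoding in steps 1--3 of Algorithm \ref{alg:rs}, and then showing that for $\tau \geq d/2$ any valid GSA choice forces $s_G < \ell$ (if $s_G = \ell$, the GSA inequality gives $\tau/n < \tfrac{1}{(\ell+1)\ell}\binom{\ell+1}{2}\tfrac d n = \tfrac d{2n}$, a contradiction). Your proof never invokes the fallback steps, so it neither covers all $\tau$ below the Johnson radius nor rules out $s = 0$ for the remaining $\tau$.

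Second, you cite Lemma \ref{lem:risingRat} for the case $\errs < \tau$ but never verify its hypothesis $\min\{w_1 - \deg f_1,\ w_2 - \deg f_2\} \geq \tfrac{s}{\ell}(\tau - \errs)$. This is where \eqref{eqn:RSfdeg} enters: with $w_1, w_2$ fixed for $\errs = \tau$, both slacks are at least $\tau - \errs$, so the condition reduces to $s \leq \ell$, which holds because $s = \ell - s_G$ with $s_G > 0$. You do observe $s \leq \ell$, but only in connection with the fast-interpolation Lemma \ref{lem:interpolComp}; connecting it to the hypothesis of Lemma \ref{lem:risingRat} is the substantive step and is missing. (Your check that $\tau^2 > n(w_1+w_2)$ is correct but pertains only to the complexity statement, not to the correctness claim of this lemma.)
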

\begin{proof}
Assume $\tau \geq \frac d 2$  since otherwise the minimum distance decoding done in steps 1--3 is sufficient.
We first wish to show that we can select parameters $s, \ell$ such that \eqref{eqn:grsPermissible} is fulfilled.

For any $\tau$ less than the bound of the lemma, there exists a valid list size $\ell$ and multiplicity $s_G$ such that the equation of the GSA is satisfied, and furthermore $s_G \leq \ell$, see e.g.~\cite[Lemma 9.5]{rothBook}.
Except in the case $s_G = \ell$, the duality between Algorithm~\ref{alg:rs} and the GSA applies and we can choose $s = \ell-s_G$.
However, if $s_G = \ell$, since the governing equation of the GSA is satisfied, we have $\frac \tau n < \frac 1 {(\ell+1)\ell}\binom {\ell+1} 2 \frac d n$, but this contradicts $\tau \geq \frac d 2$.

We can therefore choose $s, \ell$ such that \eqref{eqn:grsPermissible} is fulfilled; therefore whenever $\errs = \tau$, Theorem \ref{thm:interpol} promises that the sought $f_1, f_2$ will be found in step 5 of Algorithm~\ref{alg:rs}.

Now, to be guaranteed to find them also whenever $\errs < \tau$, we need to employ
Lemma \ref{lem:risingRat}. This can be used if it is satisfied
that
\begin{align*}
\min\{w_1 - \deg f_1, w_2 - \deg f_2\} &\geq \tfrac s \ell (\tau - \errs)
\end{align*}
Note that $w_1 - \deg f_1 \geq \tau - \errs$ using \eqref{eqn:RSfdeg}. The
same holds for $w_2 - \deg f_2$. Therefore, the above is true at least if we
satisfy
\begin{IEEEeqnarray*}{rCl?C?rCl}
\tau - \errs &\geq& \tfrac s \ell (\tau - \errs)
    & \iff &
s &\leq& \ell
\end{IEEEeqnarray*}
Thus, Lemma \ref{lem:risingRat} guarantees that as long as $s \leq \ell$, then
the $Q(x,y,z)$ we construct in step 4 will contain $yf_1(x)+zf_2(x)$ as a factor whenever $\errs
\leq \tau$. But $s \leq \ell$ is satisfied as $0 < s_G < \ell$ in all considered
cases of the GSA and $s = \ell-s_G$.
\end{proof}

\Remark This decoding radius -- the so-called Johnson bound -- is not the best
one can achieve for a given GRS code: using the GSA+KV one can decode
slightly further, namely up to the $q$-ary
Johnson bound $\frac {q-1} q\big(n - \sqrt{n(n-\frac q {q-1}d})\big)$, see e.g.
\cite{augot10} or \cite[Section 9.6]{rothBook}. 
\RemarkEnd

\begin{lemma}\label{lem:grsParamEquiv}
For given $n, k$ and $\tau$ with $\tau \geq \frac {n-k+1} 2$, then $\ell$ and
$s$ are valid choices for the
parameters for Algorithm~\ref{alg:rs} if and only if $\ell$ and $s_G = \ell-s$
are valid choices for the GSA. Furthermore, for any given $\ell$, let $s$ be
the smallest possible choice of multiplicity for Algorithm \ref{alg:rs} and $s_G$
the smallest possible choice of multiplicity for the GSA; if $\tau < n/2$ then $s
\leq s_G$, otherwise, $s \geq s_G$.
\end{lemma}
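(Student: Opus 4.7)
The plan is to first establish the bijection asserted in the first half of the lemma using the substitution $s \mapsto \ell - s$ already highlighted in the paragraph preceding the lemma, and then reduce the second half to comparing $s_{\min} + s_{\max}$ with $\ell$, where $[s_{\min}, s_{\max}] \cap \ZZ$ is the valid-multiplicity set for Algorithm \ref{alg:rs}. For the first part it suffices to verify, by a direct calculation, that substituting $s \mapsto \ell - s$ in \eqref{eqn:grsPermissible} produces exactly the GSA governing inequality for the same $n$, $d$, $\ell$, $\tau$; this gives, for each fixed $\ell$, a bijection between the valid multiplicities of the two algorithms via $s_G = \ell - s$, settling the first assertion.

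For the second assertion, I would rewrite \eqref{eqn:grsPermissible} as $F(s) < 0$ with
\[
F(s) = ns^2 + (n - 2\tau(\ell+1))s + \ell(\ell+1)(2\tau - d).
\]
Since $F$ is a convex quadratic in $s$, the valid integers for Algorithm \ref{alg:rs} form an interval $[s_{\min}, s_{\max}] \cap \ZZ$; by the bijection, the valid GSA multiplicities form the reflected interval $[\ell - s_{\max}, \ell - s_{\min}] \cap \ZZ$, so the smallest GSA multiplicity equals $\ell - s_{\max}$. Hence $s \leq s_G$ if and only if $s_{\min} + s_{\max} \leq \ell$, which is what has to be shown.

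The main obstacle is then to carry out an integer-rounding argument around the vertex $s^* = \tau(\ell+1)/n - 1/2$ of $F$. Observe that $\ell - 2s^* = (\ell+1)(1 - 2\tau/n)$, so $2s^* < \ell$ corresponds precisely to $\tau < n/2$, while $2s^* = \ell$ holds exactly at $\tau = n/2$. Because the real roots $s^\pm$ of $F$ satisfy $s_{\min} - 1 \leq s^- < s_{\min}$ and $s_{\max} < s^+ \leq s_{\max} + 1$ with $s^- + s^+ = 2s^*$, one obtains the sharp bounds $2s^* - 1 < s_{\min} + s_{\max} < 2s^* + 1$. For $\tau < n/2$ this yields $s_{\min} + s_{\max} < \ell + 1$, and hence $s_{\min} + s_{\max} \leq \ell$ by integrality; for $\tau > n/2$ the symmetric bound gives $s_{\min} + s_{\max} > \ell - 1$, i.e.\ $\geq \ell$; and at $\tau = n/2$ the only integer strictly in $(\ell - 1, \ell + 1)$ is $\ell$ itself, so $s_{\min} + s_{\max} = \ell$. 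Combining these three cases completes the proof. The delicate point is this final rounding step: a naive real-analytic argument would fail to obtain the non-strict inequalities and in particular would miss the equality at the boundary $\tau = n/2$.
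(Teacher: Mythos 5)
Your proposal is correct, and its overall strategy matches the paper's: part one is the direct substitution $s \mapsto \ell-s$ in \eqref{eqn:grsPermissible}, and part two treats the constraint as a quadratic in $s$ and exploits the resulting symmetry. The endgame differs in mechanics, though. The paper solves for $s/\ell$ and finds it must lie in an interval $[T-\sqrt D,\ T+\sqrt D]$ with $T = \frac \tau n + \frac{\tau - n/2}{n\ell}$, so the GSA interval is the mirror image about $\tfrac 1 2$; since $\tau < n/2$ is equivalent to $T < \tfrac 1 2$, the two intervals' left endpoints are ordered and the smallest admissible integers compare immediately, with no vertex or rounding analysis. Your route instead reduces the claim to comparing $s_{\min}+s_{\max}$ with $\ell$ and rounds around the vertex $2s^*$; this is a valid alternative and even yields the extra fact $s_{\min}+s_{\max}=\ell$ at $\tau = n/2$, which the lemma does not need. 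One detail you should make explicit: the bracketing $s_{\min}-1 \le s^- < s_{\min}$ and $s_{\max} < s^+ \le s_{\max}+1$ tacitly assumes that the extreme valid multiplicities are cut off by the roots of $F$ and not by the side conditions $1 \le s \le \ell-1$. This does hold here, but it uses the hypotheses: $\tau \ge \frac{n-k+1}{2}$ makes the constant term $\ell(\ell+1)(2\tau-d)$ of $F$ nonnegative, so (when valid choices exist) both roots are nonnegative and the lower cutoff $s\ge 1$ does not bind; and $F(\ell) = \ell(\ell+1)(n-d) \ge 0$ forces $s^+ \le \ell$ once a valid $s<\ell$ exists, so the upper cutoff does not bind either. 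With that sentence added, your argument is complete.
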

\begin{proof}
Only the last claim does not directly follow from the duality in parameter
choice. Consider \eqref{eqn:grsPermissible} governing the possible choice of
$s$ for Algorithm \ref{alg:rs}: rearranging to a second-degree equation in $s$
and solving, we get that $s/\ell$ must be chosen from the interval
$[T-\sqrt D;\ T+\sqrt D]$, where $T = \frac \tau n + \frac {\tau -n/2} {n\ell}$
and $D$ a discriminant whose precise expression is not important for us. Due to
the duality between Algorithm \ref{alg:rs} and the GSA, the corresponding
interval for valid $s_G/\ell$ for the GSA will be $[1-T-\sqrt D;\ 1-T+\sqrt
D]$. In addition to residing in these respective intervals, we only require of
$s/\ell$ and $s_G/\ell$ that $s$ and $s_G$ are positive integers less than
$\ell$. Therefore, whenever $\tau < n/2$ we have $T < \half$, so the smallest 
possible choice of $s$ in the former interval must be at most the smallest 
possible in the latter interval; oppositely for the case $\tau \geq n/2$.
\end{proof}

To concretely choose $\ell$ and $s$ given $n,k$ and $\tau$, we can---due to
the above lemma---use closed expressions designed for the GSA; e.g.
\cite[Eqs.(43-45)]{mceliece03}. Alternatively, Trifonov and Lee give a simple
analysis and expressions directly for the Wu list decoder in \cite{trifonov12}.

\begin{algorithm}
  \caption{Wu list decoding GRS codes}
  \label{alg:rs}
  \begin{algorithmic}[1]
    \Require A GRS code $\C$ over $\FF q$ with parameters $n, k, d=n-k+1$ and
             evaluation points $\alpha_0,\ldots,\alpha_{n-1}$, decoding radius
             $\tau < n-\sqrt{n(n-d)}$, and received word $r \in \FF q^n$.
    \Ensure  A list of all codewords in $\C$ within radius $\tau$ of $r$ or
             \Fail if there are no such words.
    \Statex 
    \State Calculate the syndrome $S(x)$ from $r$ according to
           \eqref{eqn:rsSyn}. 
    \State Run the EA on $x^{d-1}, S(x)$ and halt when $\deg s_i < \deg v_i$,
           reusing the notation of Section \ref{sec:ea}. Define $\tilde h_1(x)
           = -v_{i-1}(x)$ and $\tilde h_2(x) = -v_i(x)$.
    \State If $\tilde h_2$ is a valid error-locator of degree at most $d-\tau$,
           use it to correct $r$, and
           if this yields a word in $\C$, return this one word.
    \State Otherwise, we seek $f_1, f_2$ according to \eqref{eqn:RSfdeg}.
           Set $w_1, w_2$ to the degree bounds of $f_1$ and $f_2$ for the case
           $\errs = \tau$, and calculate $\ell$ and $s$ to satisfy
           \eqref{eqn:grsPermissible}. Construct a $Q(x,y,z)$ satisfying the
           requirements of Theorem \ref{thm:interpol} using the points
           $\{(\alpha_i, \tilde h_1(\alpha_i), \tilde h_2(\alpha_i))
           \}_{i=0}^{n-1}$.
    \State Find all factors of $Q(x,y,z)$ of the form $yf_1\T(x) + zf_2\T(x)$
           where $f_1\T$ and $f_2\T$ have degree less than $w_1$ and $w_2$
           respectively. Return \Fail if no such factors exist.
    \State For each such factor, construct $\Lambda\T(x) = f_1\T(x)\tilde
           h_1(x) + f_2\T(x)\tilde h_2(x)$. If it is a valid error-locator, use
           it for correcting $r$. Return \Fail if none of the factors yield
           error-locators
    \State Return those of the corrected words that are in $\C$. Return \Fail
           if there are no such words.
  \end{algorithmic}
\end{algorithm}

\subsection{Complexity analysis}\label{sec:grsComplexity}

The complexity of the totality of Algorithm \ref{alg:rs} is easily found using
the results of Section \ref{sec:ratFast}; note that $\tau^2 > n(w_1+w_2)$ whenever
$\tau < n-\sqrt{n-d}$ so we can use Lemma \ref{lem:interpolComp}. For
simplicity, we will assume that
$q \in O(n)$ where $q$ is the cardinality of $\F$. In that case, as $\ell \geq
s$, steps 4 and 5 can be computed in $O(\ell^{\omega+1}n\log^{O(1)}(\ell n))$.
The remaining steps are of lower order: calculating $S(x)$ in step 1 can be
done in $O(n\log n)$ using fast Fourier methods, and the EA in step 2 has
complexity $O(n\log^2 n\log\log n)$. Checking whether a polynomial is a valid
error-locator takes at most $O(q)$, and in step 3 we check two such, while in
step 6 we check at most $\ell$ such. Thus we have the following

\begin{lemma}\label{lem:rsComplexity} 
  If $q \in O(n)$ then Algorithm \ref{alg:rs} has complexity
  $O(\ell^\omega s n\log^{O(1)}(\ell n))$.
\end{lemma}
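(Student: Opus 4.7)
The plan is to bound the cost of each of the seven steps of Algorithm~\ref{alg:rs} separately and observe that the interpolation in step~4 dominates everything else; the proof of the lemma is then essentially bookkeeping.

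The one non-routine verification is that the hypotheses of Lemmas~\ref{lem:interpolComp} and~\ref{lem:interpolFactor} are met. By the choice of weights preceding \eqref{eqn:grsPermissible}, we have $w_1 + w_2 = 2\tau - d$. The required inequality $\tau^2 > n(w_1+w_2)$ thus becomes $\tau^2 - 2n\tau + nd > 0$, which (using $\tau \leq n$) rearranges exactly to $\tau < n - \sqrt{n(n-d)}$. Since this is the standing hypothesis on the decoding radius, Lemma~\ref{lem:interpolComp} applies to step~4, yielding complexity $O(\ell^\omega s n \log^{O(1)}(\ell n))$, and Lemma~\ref{lem:interpolFactor} applies to step~5, yielding complexity $O(\ell^2 s n \log^{2+o(1)}(\ell n))$. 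Because $\omega \geq 2$ and $s \geq 1$, step~5 is absorbed into step~4.

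Next I would account for the remaining steps. Step~1 computes the syndrome polynomial $S(x)$ of degree less than $n-k$; using fast multipoint evaluation under the assumption $q \in O(n)$, this costs $O(n \log^{O(1)} n)$. Step~2 is one run of the fast extended Euclidean algorithm on inputs of degree at most $n$, contributing $O(n \log^2 n \log\log n)$. Step~3 checks whether $\tilde h_2$ is a valid error-locator by evaluating it at the $n$ code locators and then verifying the tentative decoded word against $\C$, both softly linear in $n$. Step~6 performs at most $\ell$ such checks, for a total of $O(\ell n \log^{O(1)} n)$, and step~7 is a trivial $O(\ell n)$ comparison.

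Summing these contributions, every term is absorbed into $O(\ell^\omega s n \log^{O(1)}(\ell n))$ because $\omega s \geq 2$, and this gives the lemma. The only part with any content is the verification that the fast interpolation hypothesis $\tau^2 > n(w_1+w_2)$ holds throughout the decoding radius, which I expect to be the main (and only) obstacle; everything else reduces to citing the complexities already established.
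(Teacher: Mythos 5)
Your proposal is correct and follows essentially the same route as the paper: verify $\tau^2 > n(w_1+w_2)$ from $w_1+w_2=2\tau-d$ and the decoding-radius hypothesis so that Lemma~\ref{lem:interpolComp} (and Lemma~\ref{lem:interpolFactor}) apply to steps 4--5, then note that syndrome computation, the EA, and the error-locator checks in the remaining steps are of lower order. The only tiny omission is that Lemma~\ref{lem:interpolComp} also requires $\ell \geq s$, which the paper invokes explicitly and which is guaranteed by the parameter analysis of Section~\ref{sec:grsParams}.
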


Using Lemma \ref{lem:grsParamEquiv} we can compare running times with those for
variants of the GSA. In this light, the above estimate is fast as the fastest
GRS
list-decoders based on the GSA. The bottle-neck is -- as it is here -- the
construction of an interpolation polynomial. Beelen and Brander gave in
\cite{beelenBrander} an algorithm for computing the interpolation polynomial in
the GSA with complexity $O(\ell^5n\log^2 n\log\log n)$, using an approach very
close to the one here, and using a row reduction algorithm on an appropriate
polynomial matrix. However, had they used the one by Giorgi et al.
\cite{giorgi03} instead of the slightly slower by Alekhnovich
\cite{alekhnovich05}, they would have reached the same complexity as in Lemma
\ref{lem:rsComplexity}, but using the value of $s$ needed for the GSA.

It would therefore seem that, when the multiplicity $s$ for Algorithm \ref{alg:rs}
is smaller than the multiplicity $s_G$ for the GSA, Algorithm \ref{alg:rs} would be
faster than the GSA, though as we have only presented asymptotic analysis, one
would need implementations to properly verify this. From Lemma
\ref{lem:grsParamEquiv} and its proof, we know that the multiplicity for
Algorithm \ref{alg:rs} is smaller whenever $\tau < n/2$ and that the
difference from the multiplicity of the GSA increases with decreasing $\frac \tau n$.

Bernstein also gives a decoding algorithm in \cite{bernstein11simplified} with
the same complexity, but his is a variant of the GSA+KV, and it can thus decode
a GRS code to the slightly higher $q$-ary Johnson radius: $\frac{q-1} q\big(n -
\sqrt{n(n - \frac q {q-1}d)}\big)$; see also Section \ref{sec:goppaComplexity}.

\endgroup 
\section{Wu list decoding binary Goppa codes}\label{sec:goppa}
\begingroup 
\newcommand{\lo}{{<_1}}

\subsection{The codes}

Consider an irreducible polynomial $g(x) \in \FF{2^m}[x]$ as well as $n$
distinct elements of $\FF{2^m}$, $L = (\alpha_0, \ldots, \alpha_{n-1})$.
Then
the irreducible binary Goppa code $\Gamma(g, L)$ with Goppa polynomial $g$ over
$L$ is the set
\[
     \left\{ (c_1, \ldots, c_n) \in \F_2^n   \ \Bigg|\ 
        \sum_{i=0}^{n-1} \frac {c_i}{x - \alpha_i} \equiv 0 \mod g(x) \right\}
\]
This code has parameters $\params n {\geq n - m\deg g} {\geq 2\deg g+1}$. A
binary Goppa code $\Gamma(g,L)$ is a subfield subcode of an $\params n {n-\deg
g} {\deg g+1}$ GRS code over $\FF{2^m}$. It is also an alternant code. See e.g.
\cite{MacWilliamsSloane} for a more complete description.

Consider a sent codeword $c = (c_0, \ldots, c_{n-1})$ and a corresponding
received word $r = (r_0, \ldots, r_{n-1})$. For these codes, a natural
definition of a syndrome polynomial is then
\begin{equation}\label{eqn:gopSyn}
  S(x) = \left( \sum_{i=0}^{n-1} \frac {r_i}{x - \alpha_i} \modop g(x) \right)
\end{equation}
Like in the preceding section, we also define $E$, the error-locator and
error-evaluator, the last being slightly simpler due to the binary field:
\begin{IEEEeqnarray*}{rCl}
    E          &=& \{ i \mid c_i \neq r_i \} \\
    \Lambda(x) &=& \prod_{i \in E} (x-\alpha_i) \\
    \Omega(x)  &=& \sum_{i \in E} \prod_{j \in E \setminus \{i\}} (x-\alpha_j)
\end{IEEEeqnarray*}
Introduce also $\errs = |E|$ as the number of errors.
We have again that $\gcd(\Lambda, \Omega) = 1$. It also turns out that
the introduced polynomials satisfy a Key Equation \cite{patterson75}:
\begin{equation}
    \Lambda(x)S(x) \equiv \Omega(x) \mod g(x)
\end{equation}
Note that for a binary code, the receiver can decode immediately upon having
calculated the error locator, even without the error evaluator; the error value
is always 1.

\subsection{The list-decoding algorithm}
Now we could proceed exactly as in Section \ref{sec:RSdecode}, and we would
arrive at a list decoder correcting up to $n - \sqrt{n(n-\deg g-1)}$ errors.
This is the same decoding radius reached by simply decoding the enveloping GRS
code with the GSA or the Wu decoder. However, this radius
is much less than $\deg g$ which is promised by the minimum distance of the
binary Goppa code, and which can be corrected by Patterson's decoder
\cite{patterson75}.

Therefore, we proceed to rewrite the Key Equation in the same way as Patterson.
In the following, it will be useful to assume $\errs < 2\deg g$ as an initial
and reasonable bound on our list decoder. Then, collecting even and odd terms,
we can introduce polynomials $a(x), b(x)$ such that $\Lambda(x) = a^2(x) +
xb^2(x)$ and satisfying $\deg a \leq \lfloor \frac \errs 2 \rfloor$ and $\deg b
\leq \lfloor \frac {\errs-1} 2 \rfloor$. Now, note from the definition of the
polynomials that $\Omega(x)$ equals the formal derivative of $\Lambda(x)$, so
we get $\Omega(x) = b^2(x)$ in this field of characteristic 2. The Key Equation
thus becomes
\begin{IEEEeqnarray}{rCll/r}
  (a^2(x) + xb^2(x))S(x) &\equiv& b^2(x) &\mod g(x)    &\iff \notag\\
  b^2(x)(x + S^{-1}(x))  &\equiv& a^2(x) &\mod g(x)    \label{eqn:goppaDerv}
\end{IEEEeqnarray}
Note here that calculating the inverse of $S(x)$ modulo $g(x)$ is possible
since $\deg S < \deg g$ and $g(x)$ is irreducible. 

It might now be that $S^{-1}(x) \equiv x \mod g(x)$ in which case $a^2(x)
\equiv 0 \mod g(x)$. As $g(x)$ is irreducible, $a(x)$ must be a multiple of
$g(x)$, which means that $a(x) = 0$ as $\errs < 2\deg g$. This implies
$\Lambda(x) = xb^2(x)$, which is only a legal error locator if $0 \in L$ and
$b(x) = 1$. So in that case, $\Lambda(x) = x$ is the only valid solution to the
Key Equation, resulting in one error to be corrected.

Having taken care of the case $S^{-1}(x) \equiv x \mod g(x)$, let us now assume
that this is not the case and continue. As $g(x)$ is irreducible,
$\F_{2^m}[x]/\langle g(x) \rangle$ is a finite field of characteristic 2, so we
can compute a square-root; in particular, we can find an $\tilde S(x)$ such that
$\tilde S^2(x) \equiv x + S^{-1}(x) \mod g(x)$ and $\deg \tilde S < \deg g$.
This value is directly computable by the receiver after having computed $S(x)$.
Inserting $\tilde S(x)$ in \eqref{eqn:goppaDerv}, we get
\begin{IEEEeqnarray}{rCll/r}
  b^2(x)\tilde S^2(x)  &\equiv& a^2(x) &\mod g(x)    &\iff \notag \\
  b(x)\tilde S(x)  &\equiv& a(x) &\mod g(x) \label{eqn:gopKeyab}
\end{IEEEeqnarray}
Now we are in the case of a new Key Equation, where the degrees of the unknown
polynomials are halved! We proceed in a manner resembling that of the GRS codes
from the preceding section. The above equation tells us that $a(x) - yb(x) \in
M = [g(x), y-\tilde S(x)]$. If we run the EA on $g(x)$ and $\tilde S(x)$, by
Proposition
\ref{prop:EAgrob}, we get a Gr\"obner basis $G = \{h_1, h_2\}$ of $M$ of module
term order $<_\mu$ for any integer $\mu \geq 0$; for
reasons becoming apparent momentarily, we choose $\mu = 1$.

By
Proposition \ref{prop:grobDegs}, we know there exist polynomials $f_1, f_2 \in
\F[x]$ such that
\begin{equation}\label{eqn:gopSplitab}
    a(x) - yb(x) = f_1(x)h_1(x,y) + f_2(x)h_2(x,y)
\end{equation}
Remembering Proposition \ref{prop:moduleGrob}, assume that $\ydeg_\lo h_2 = 1$
and therefore that $\ydeg_\lo h_1=0$. Now, the case here is slightly more
complicated than that of the GRS codes, as we do not know a priori which of
$a(x)$ and $b(x)$ has the largest degree. If $\errs$ is even then $\deg a =
\frac \errs 2$ and $\deg b \leq \frac \errs 2 - 1$ whereby $a(x) >_1 yb(x)$.
From Proposition \ref{prop:grobDegs} we then get
\begin{IEEEeqnarray}{rClCl}
    \deg f_1 &=& \deg a - \deg_\lo^x(h_1) 
             &=& \tfrac \errs 2 - \deg g + \deg_\lo^x(h_2)
                \notag \\[2pt]
    \deg f_2 &\leq& \deg a - 1 - \deg_\lo^x(h_2) 
             &=& \tfrac \errs 2 - 1 - \deg_\lo^x(h_2)
    \label{eqn:gopfdeg}
\end{IEEEeqnarray}
In a similar manner, when $\errs$ is odd we get $a(x) <_1 yb(x)$ and 
\begin{IEEEeqnarray}{rClCl}
    \deg f_1 &\leq& \deg b + 1 - \deg_\lo^x(h_1) - 1
             &=& \tfrac {\errs-1} 2 - \deg g + \deg_\lo^x(h_2)
                \notag \\[2pt]
    \deg f_2 &=& \deg b  - \deg_\lo^x(h_2) 
             &=& \tfrac {\errs-1} 2 - \deg_\lo^x(h_2)
    \label{eqn:gopfdegOdd}
\end{IEEEeqnarray}
In either of the above cases, we see that if $\errs \leq \deg g$, one of the
bounds for $\deg f_1$ and $\deg f_2$ will be negative, in which case either
$f_1$ or $f_2$ will be zero. This in turn means that $a(x) - yb(x)$ will be a
multiple of either $h_1$ or $h_2$, namely the one which has the same $y$-degree
as $a(x) - yb(x)$ under $<_1$. As $\Lambda(x)$ is square-free, $a(x)$ and
$b(x)$ must be
relatively prime, so this multiple must be a constant. This corresponds to
Patterson's decoder \cite{patterson75}, except that there the BMA is used
instead of the EA to solve \eqref{eqn:gopKeyab}. This requires an initial
transformation of \eqref{eqn:gopKeyab}, and an ``inverse'' transformation on
the output of the BMA.

In case $f_1$ and $f_2$ are both non-zero, spurred on by the success of the
last section, we would like to be able to find them using rational
interpolation. However, in the last section, we knew that the evaluation of the
target polynomial $\Lambda(x)$ would be 0 in at least $\errs$ positions; for
neither $a(x)$ nor $b(x)$ do we have such information. We therefore first need
to re-enter \eqref{eqn:gopSplitab} into their defining expression: $\Lambda(x)
= a^2(x) + xb^2(x)$. Let first $h_1(x,y) = h_{10}(x) + yh_{11}(x)$ and
$h_2(x,y) = h_{20}(x) + yh_{21}(x)$. Then using \eqref{eqn:gopSplitab}, we get
\begin{IEEEeqnarray*}{rCll}
  \Lambda(x) &=&  \ &(f_1(x)h_{10}(x) + f_2(x)h_{20}(x))^2 \\
             & & + x&(f_1(x)h_{11}(x) + f_2(x)h_{21}(x))^2  \\
             &=& \IEEEeqnarraymulticol{2}{l}{
                    f_1^2(x)(h_{10}^2(x)+xh_{11}^2(x))
                     + f_2^2(x)(h_{20}^2(x)+xh_{21}^2(x))
                  }
\end{IEEEeqnarray*}
Similarly to the preceding section, for at least $\errs$ values of $x_0 \in L$,
we now know that $\Lambda(x_0) = 0$. For these $\errs$ values of $x_0$, by the
above, we therefore have
\[
  f_1(x_0)\sqrt{\hat h_1(x_0)} + f_2(x_0)\sqrt{\hat h_2(x_0)} = 0
\]
where $\hat h_1(x) = h_{10}^2(x)+xh_{11}^2(x)$ and $\hat h_2(x) =
h_{20}^2(x)+xh_{21}^2(x)$. For us to be able to use Theorem \ref{thm:interpol},
we have then only to certify that $f_1$ and $f_2$ are coprime, and that $\hat
h_1$ and $\hat h_2$ will never simultaneously evaluate to zero. But the former
is true since $a$ and $b$ are coprime which is due to $\Lambda$ being
square-free, and the latter is true since $h_1(x,y)$ and $h_2(x,y)$ are
coprime. We have therefore finally arrived at a rational interpolation problem. 

We will again use the results of Section \ref{sec:ratInter} to solve this
problem for some values of $\errs, n, \deg g$. The next section is concerned
with that analysis. The complete list decoder is shown in Algorithm
\ref{alg:goppa}.

\Remark
As mentioned, Patterson's original algorithm \cite{patterson75} solves
\eqref{eqn:gopKeyab} using the BMA. One could possibly also extend this for
list decoding using rational interpolation. However, a transformation is needed
for letting the BMA solve \eqref{eqn:gopKeyab}, and this makes the details for
rational interpolation less
straight-forward. One should also note that the BMA and the EA in their
straightforward implementations have the same asymptotic running time
$O(\theta^2)$ (see e.g.
\cite{dornstetter87}), and that both admit a recursive version with asymptotic
running time $O(\theta\log^2 \theta\log\log \theta)$, where $\theta$ is the degree of the
ingoing polynomials (see e.g. \cite[Chapter 11.7]{blahut83}
respectively \cite[Chapter 8.9]{ahoDesignCS}).
\RemarkEnd

\subsection{Analysis of the parameters}

For a given
decoding radius $\tau$, we want to know whether we can construct a $Q(x,y,z)$
such that whenever $\errs \leq \tau$, we can find $f_1$ and $f_2$ in the manner
specified in Theorem \ref{thm:interpol}, and we want values for the parameters
of $\ell$ and $s$. 

We should set $w_1, w_2$ inspired by \eqref{eqn:gopfdeg} and
\eqref{eqn:gopfdegOdd}, but we need just one set of values which will cover
both the even and odd cases. Therefore, we use for both $f_1$ and $f_2$ the
larger of the degree bounds:
\begin{IEEEeqnarray}{rCl}
    w_1 &=& \tfrac \tau 2 - \deg g + \deg_\lo^x(h_2) \notag \\
    w_2  &=& \tfrac {\tau-1} 2 - \deg_\lo^x(h_2) \label{eqn:gopwset}
\end{IEEEeqnarray}
Now define $w = w_1 + w_2 = \tau - \deg g - \half$. Note that $w$ and either
$w_1$ or $w_2$ will not be integer. Inserting the value for $w$ and
rearranging, \eqref{eqn:ratPermissible} becomes
\begin{equation}\label{eqn:gopPermissible}
\!
  \frac \tau n < \frac 1 {(\ell+1)(\ell-2s)}
  \left( \binom {\ell+1} 2 \frac {\deg g + \frac 1 2} n -2\binom {s+1} 2 \right)
\!
\end{equation}
if we assume that $\ell > 2s$. Just as we before found that the governing
equation for Algorithm \ref{alg:rs} is parallel to that of the GSA, the above
equation is parallel to the governing equation of the GSA+KV: using e.g.
\cite[Lemma 9.7]{rothBook} and setting the two multiplicities as $r=\ell-s$ and
$\bar r=s$ we achieve the same equation. This means that Algorithm
\ref{alg:goppa} has the same decoding radius as the GSA+KV when the choice of
the two multiplicities are restricted thusly. From \cite[Problem
9.9]{rothBook}, the choice $\bar r = \ell-r$ exactly maximises the decoding
radius which is then given in \cite[Problem 9.10]{rothBook}. We also get $\bar
r < r$ so $\bar r < \ell/2$ and hence in our case $\ell >
2s$; this is also what we assumed in \eqref{eqn:gopPermissible} which means we
can indeed reuse the analysis from the GSA+KV.

\begin{lemma}\label{lem:goppaRadius}
Algorithm \ref{alg:goppa} can list decode for any $\tau < \half n- \half
\sqrt{n(n-4\deg g-2)}$.
\end{lemma}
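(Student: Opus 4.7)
The plan is to mirror the proof given for the GRS case (the analogous lemma right before Algorithm \ref{alg:rs}'s complexity discussion), exploiting the duality with the GSA+KV that the authors have already set up via the correspondence $r = \ell - s$ and $\bar r = s$ between \eqref{eqn:gopPermissible} and the standard governing inequality of the GSA+KV.

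First, assume $\tau < \half n - \half\sqrt{n(n-4\deg g - 2)}$. By \cite[Problem 9.9 and 9.10]{rothBook}, the choice of two multiplicities $\bar r = \ell - r$ in the GSA+KV maximises its decoding radius, and this maximum equals exactly the binary Johnson radius of the lemma with designed distance $2\deg g + 1$ (so the quantity $\deg g + \tfrac 1 2$ appearing in \eqref{eqn:gopPermissible} is $d/2$ in that formulation). Since $\bar r < r$ in this optimal choice, we automatically have $s < \ell/2$, so the assumption $\ell > 2s$ made in deriving \eqref{eqn:gopPermissible} is satisfied, and the duality applies. This yields concrete $\ell$ and $s$ fulfilling \eqref{eqn:gopPermissible}, hence valid parameters for Algorithm \ref{alg:goppa}.

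Next, for these $\ell, s$ and $\tau$, set $w_1, w_2$ as in \eqref{eqn:gopwset}. Then whenever $\errs = \tau$, Theorem \ref{thm:interpol} guarantees that the $Q(x,y,z)$ constructed in step 4 of Algorithm \ref{alg:goppa} admits $yf_1(x) + zf_2(x)$ as a factor, where $f_1, f_2$ are the polynomials characterised by \eqref{eqn:gopSplitab}. From the factor the true error locator $\Lambda(x) = (f_1 h_{10} + f_2 h_{20})^2 + x(f_1 h_{11} + f_2 h_{21})^2$ can be recovered in step 6.

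The remaining task is to handle $\errs < \tau$ by invoking Lemma \ref{lem:risingRat} with $\tilde\tau = \errs$ and $\tilde w_i = \deg f_i$. This is the part I expect to require the most care, because the bounds \eqref{eqn:gopfdeg} and \eqref{eqn:gopfdegOdd} differ in the even and odd cases and the degrees shrink only by $\tfrac 1 2$ per extra error-free position. A direct calculation in both parities gives
\[
  w_1 - \deg f_1 \;\geq\; \tfrac{\tau - \errs}{2}, \qquad
  w_2 - \deg f_2 \;\geq\; \tfrac{\tau - \errs}{2},
\]
so the hypothesis of Lemma \ref{lem:risingRat} reduces to $\tfrac{\tau - \errs}{2} \geq \tfrac{s}{\ell}(\tau - \errs)$, i.e.\ to $\ell \geq 2s$, which is precisely the condition already established above. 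Hence $yf_1(x) + zf_2(x)$ divides $Q(x,y,z)$ for every $\errs \leq \tau$, and Algorithm \ref{alg:goppa} returns a list containing every codeword within distance $\tau$ of the received word.
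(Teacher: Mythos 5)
Your proposal is correct and follows essentially the same route as the paper's proof: it invokes the already-established duality with the GSA+KV (including $\bar r < r$, hence $\ell > 2s$) to get valid $\ell, s$ for $\errs = \tau$, and then handles $\errs < \tau$ via Lemma \ref{lem:risingRat}, where the parity-wise bounds $w_i - \deg f_i \geq \half(\tau-\errs)$ from \eqref{eqn:gopfdeg}, \eqref{eqn:gopfdegOdd} and \eqref{eqn:gopwset} reduce the condition to $\ell \geq 2s$. The only difference is that you spell out the duality argument inside the proof, whereas the paper treats it as already established in the preceding discussion; the substance is identical.
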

\begin{proof}
With the above duality, we have already established that for any given $\tau$
less than the given decoding radius we can select values of $s$ and $\ell$ such
that the sought $f_1$ and $f_2$ can be found whenever $\errs = \tau$. We again
have to employ Lemma \ref{lem:risingRat} in order to guarantee that $f_1$ and
$f_2$ will be found when $\errs < \tau$. The lemma promises this if we can
satisfy
\[
  \min\{w_1 - \deg f_1, w_2 - \deg f_2\} \geq \tfrac s \ell (\tau - \errs)
\]
Using \eqref{eqn:gopfdeg}, \eqref{eqn:gopfdegOdd}  and \eqref{eqn:gopwset}, we
see that $w_1 - \deg f_1 \geq \tfrac \tau 2 - \lfloor \tfrac \errs 2 \rfloor
\geq \half(\tau - \errs)$, both when $\errs$ is even and when it's odd.
Similarly for $w_2 - \deg f_2$. The condition of Lemma \ref{lem:risingRat} is
then always satisfied
as long as $\ell > 2s$. This we already assumed in \eqref{eqn:gopPermissible}.
\end{proof}

\Remark
It is the necessity of having to use Lemma \ref{lem:risingRat} that adds the
peculiar complication
on the setting of $w_1$ and $w_2$. If we choose a $\tau$, we will know its
parity, so we could choose $w_1$ and $w_2$ as implied from \eqref{eqn:gopfdeg} or
\eqref{eqn:gopfdegOdd}, according to that parity. This would allow us to decode
exactly $\tau$ errors; analysis shows that in that case one could choose any
$\tau <
\half n - \half\sqrt{n(n-4\deg g-4)}$, i.e. slightly greater than the binary
Johnson radius. However, the condition of Lemma \ref{lem:risingRat} would
then not always be true so we would not always be able to correct \emph{fewer}
errors. This is the reason of having to set $w_1$ and $w_2$ as in
\eqref{eqn:gopwset}.

Interestingly, if we allow two runs of the rational interpolation procedure
instead of just one, we \emph{can} achieve the decoding radius $\tau <
\half n - \half\sqrt{n(n-4\deg g-4)}$ and still also decode fewer than $\tau$
errors: let the first run be responsible for finding those error locators
corresponding to even number of errors, and the second run for the odd number
of errors. For each run we then only need a looser version of Lemma
\ref{lem:risingRat}, where only a number of points with the right parity need
to be interpolated as well. Then we can set $w_1, w_2$ according to
\eqref{eqn:gopfdeg} in the even-parity run, and similarly $w_1, w_2$ from
\eqref{eqn:gopfdegOdd} in the odd-parity run. This yields the mentioned
decoding radius.
\RemarkEnd

\begin{lemma}
For given $n$, $\deg g$ and $\tau$, then $\ell$ and $s$ are valid choices for
the parameters for Algorithm \ref{alg:goppa} if and only if $\ell$, $r=\ell-s$
and $\bar r=s$ are valid parameters for the GSA+KV as described in
\cite[\S 9.6]{rothBook}.
\end{lemma}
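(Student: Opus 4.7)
The plan is to reduce the lemma to the identity between inequalities that was already flagged in the paragraph following \eqref{eqn:gopPermissible}. Concretely, the preceding analysis shows that $(\ell,s)$ is a valid choice of parameters for Algorithm~\ref{alg:goppa} exactly when $\ell > 2s$ and \eqref{eqn:gopPermissible} holds. I would therefore devote the proof to verifying that the map $s \mapsto (r,\bar r) = (\ell-s,\, s)$ is a bijection between these $(\ell,s)$ and the valid triples $(\ell,r,\bar r)$ for GSA+KV as described in \cite[\S 9.6]{rothBook}.

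The first step is to write out both inequalities in comparable form. After clearing denominators, \eqref{eqn:gopPermissible} reads $\tau(\ell+1)(\ell-2s) < \binom{\ell+1}{2}(\deg g + \tfrac 1 2) - 2n\binom{s+1}{2}$. On the GSA+KV side, one applies Roth's formulation (\cite[Lemma 9.7]{rothBook}) to the enveloping alternant code: via the identity of \cite{skhn76} this is a GRS code of designed distance $2\deg g + 1$, and the KV assignment puts multiplicity $r$ on the positions of one reliability class and $\bar r$ on the other. Under the substitution $r = \ell - s$, $\bar r = s$ one has $\ell + 1 = r + \bar r + 1$, $\ell - 2s = r - \bar r$, and $\binom{s+1}{2} = \binom{\bar r + 1}{2}$; a routine term-by-term comparison then yields equality of the two inequalities. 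The auxiliary constraint $\ell > 2s$ used in deriving \eqref{eqn:gopPermissible} translates into $r > \bar r$, which is precisely the normalising assumption that one may make for the KV parameters (as noted after Lemma~\ref{lem:goppaRadius} and in \cite[Problem~9.9]{rothBook}). This delivers the ``if and only if'' in both directions.

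The main obstacle I anticipate is bookkeeping rather than mathematics: one must carefully match the constants in \cite[Lemma 9.7]{rothBook} against the Goppa-adapted factor $\deg g + \tfrac 1 2$ in \eqref{eqn:gopPermissible}, ensuring that the doubled Goppa polynomial $g^2$ enters the enveloping GRS code exactly so that its designed distance produces the correct coefficient on the right-hand side, and that the length $n$ enters identically in the two settings. Beyond this, the argument is a direct symbolic verification, so no new structural idea is required.
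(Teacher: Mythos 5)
Your proposal is correct and follows essentially the same route as the paper: the paper in fact offers no separate proof of this lemma, taking it as immediate from the paragraph after \eqref{eqn:gopPermissible}, where substituting $r=\ell-s$, $\bar r=s$ into \cite[Lemma 9.7]{rothBook} is shown to reproduce the governing inequality, and where $\ell>2s$ is matched with $\bar r<r$ via \cite[Problems 9.9--9.10]{rothBook}. Your write-up merely makes that same substitution argument and the bookkeeping (including the designed distance $2\deg g+1$ of the enveloping code) explicit.
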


For closed expressions for valid values of the parameters $\ell$ and $s$, one
can use the analysis of Trifonov and Lee \cite{trifonov12} which works for any
application of the rational interpolation method.

\begin{algorithm}
  \caption{Wu list decoding binary Goppa codes}
  \label{alg:goppa}
  \begin{algorithmic}[1]
    \Require A binary Goppa code $\C$ with Goppa polynomial $g(x) \in \FF
      {2^m}[x]$ and evaluation points $\alpha_0,\ldots,\alpha_{n-1}$, a
      decoding radius $\tau < \half n-\half \sqrt{n-4\deg g-2}$, and a received
      word $r \in \FF 2^n$.
    \Ensure  A list of all codewords in $\C$ within radius $\tau$ of $r$ or
             \Fail if there are no such words.
    \Statex 
    \State Calculate the syndrome $S(x)$ from $r$ according to
           \eqref{eqn:gopSyn}. If $S^{-1}(x) = x$ and $0 \in L$, then flip the
           corresponding bit of $r$ and return that word. If $S^{-1}(x) = x$
           and $0 \notin L$, return \Fail. Otherwise, calculate $\tilde S(x)$
           satisfying $\deg \tilde S < \deg g$ and $\tilde S^2(x) \equiv x +
           S^{-1}(x) \mod g(x)$.
    \State Run the EA on $g(x), \tilde S(x)$ and halt when $\deg s_i < \deg
           v_i+1$,
           reusing the notation of Section \ref{sec:ea}. Define $\hat h_1(x) =
           s_{i-1}^2(x) + xv_{i-1}^2(x)$ and $\hat h_2(x) = s_i^2(x) +
           xv_i^2(x)$.
    \State If either $\hat h_1(x)$ or $\hat h_2(x)$ are valid error-locators of
           degree at most $2\deg g-\tau$, use
           that to decode, and if this yields a word in $\C$, return this one
           word.
    \State Otherwise, we seek $f_1, f_2$ according to \eqref{eqn:gopSplitab}.
           Set $w_1, w_2$ as in \eqref{eqn:gopwset}, and calculate $\ell$ and
           $s$ to satisfy \eqref{eqn:gopPermissible}. Construct a $Q(x,y,z)$
           satisfying the requirements of Theorem \ref{thm:interpol} using the
           points $\big\{\big(\alpha_i, \sqrt{\hat h_1(\alpha_i)}, \sqrt{\hat
           h_2(\alpha_i)}\big)
           \big\}_{i=0}^{n-1}$.
    \State Find all factors of $Q(x,y,z)$ of the form $yf_1\T(x) + zf_2\T(x)$
           where $f_1\T$ and $f_2\T$ have degree less than $w_1$ and $w_2$
           respectively. Return \Fail if no such factors exist.
    \State For each such factor, construct $\Lambda\T(x) = f_1\T{}^2(x)\hat
           h_1(x) + f_2\T{}^2(x)\hat h_2(x)$. If it is a valid error-locator,
           use it for decoding $r$. Return \Fail if none of the factors yield
           error-locators
    \State Return those of the decoded words that are in $\C$. Return \Fail if
           there are no such words.
  \end{algorithmic}
\end{algorithm}

\subsection{Complexity Analysis}\label{sec:goppaComplexity}

Again, the complexity of Algorithm \ref{alg:goppa} is easily found using the
results of Section \ref{sec:ratFast}. For simplicity, we will assume that $2^m
\in O(n)$. In that case, as $\ell \geq s$, steps 4 and 5 can be computed in
$O(\ell^\omega s n\log^{O(1)}(\ell n))$. The remaining steps are of lower
order, seen using arguments similar to those in Section
\ref{sec:grsComplexity}.

\begin{lemma}\label{lem:gopComplexity} 
  If $2^m \in O(n)$ then Algorithm \ref{alg:goppa} has complexity
  $O(\ell^\omega s n\log^{O(1)}(\ell n))$.
\end{lemma}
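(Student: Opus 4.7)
The plan is to mirror the complexity analysis of Lemma \ref{lem:rsComplexity} in Section \ref{sec:grsComplexity}: show that steps 4 and 5 dominate via Lemmas \ref{lem:interpolComp} and \ref{lem:interpolFactor}, and that every other step is of strictly lower order under $2^m \in O(n)$.

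First I would verify the hypothesis $\tau^2 > n(w_1+w_2)$ needed to invoke Lemma \ref{lem:interpolComp} on step 4. Substituting $w_1 + w_2 = \tau - \deg g - \half$ from \eqref{eqn:gopwset} yields $\tau^2 - n\tau + n(\deg g + \half) > 0$, whose smaller real root is exactly $\half n - \half\sqrt{n(n-4\deg g-2)}$, matching the input constraint on $\tau$ in Algorithm \ref{alg:goppa}. Hence Lemma \ref{lem:interpolComp} bounds step 4 by $O(\ell^\omega s n \log^{O(1)}(\ell n))$, and Lemma \ref{lem:interpolFactor} bounds step 5 by $O(\ell^2 s n \log(\ell n)^{2+o(1)})$, which is absorbed into the former since $\omega \geq 2$.

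Next I would argue the other steps are of strictly lower order. In step 1: the syndrome from \eqref{eqn:gopSyn} can be computed in $O(n\log^{O(1)} n)$ via fast Lagrange techniques; the modular inverse $S^{-1}(x) \bmod g(x)$ via the recursive EA in $O(n\log^{O(1)} n)$; and the square root $\tilde S(x)$ by first precomputing $\sqrt{x} \bmod g(x)$ once and then exploiting the $\Ftwo$-linearity of squaring in characteristic 2---splitting any polynomial into its even- and odd-indexed parts reduces a square-root extraction to coefficient-wise square roots in $\FF{2^m}$ plus a single modular multiplication by $\sqrt{x}$. Step 2 is one run of the recursive EA on inputs of degree at most $n$, also $O(n\log^{O(1)} n)$. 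Step 3 tests at most two candidate locators against $|L| \leq n$ points, $O(n\log^{O(1)} n)$, and steps 6--7 test at most $\ell$ candidate locators, contributing $O(\ell n \log^{O(1)} n)$ altogether.

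Summing, the total cost is dominated by steps 4--5, yielding the claimed $O(\ell^\omega s n\log^{O(1)}(\ell n))$. The only mild technicality is pinning down a sufficiently fast cost for the characteristic-2 square root in step 1; however, in the list-decoding regime the interpolation cost in steps 4--5 comfortably dwarfs it, exactly paralleling the corresponding lower-order argument for GRS codes in Section \ref{sec:grsComplexity}.
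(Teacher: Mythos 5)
Your proposal is correct and follows essentially the same route as the paper: steps 4--5 are bounded by the fast interpolation and factorisation results of Section \ref{sec:ratFast} (Lemmas \ref{lem:interpolComp} and \ref{lem:interpolFactor}), and the remaining steps are argued to be of lower order just as in the GRS analysis of Section \ref{sec:grsComplexity}. Your explicit check that $\tau < \half n - \half\sqrt{n(n-4\deg g-2)}$ implies $\tau^2 > n(w_1+w_2)$, and your sketch of the characteristic-2 square-root computation in step 1, are details the paper leaves implicit, and they check out.
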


The GSA+KV can decode
binary Goppa codes -- in fact any alternant code -- up to the small-field
Johnson bound. Also here, the bottle-neck of the complexity is the construction
of the interpolation polynomial. Bernstein in \cite{bernstein11simplified}
gives an algorithm for constructing this fast, and in terms of $\ell$ and $n$
and relaxing $s, r$ and $\bar r$ to $\ell$, it has the same complexity as the
above.

However, similarly to Section \ref{sec:grsComplexity}, one should note that
$s=\bar r < \ell/2$ and $r =\ell-\bar r > \ell/2$, and the difference between
$s$ and $r$ increases with the rate of the code. From this view, one would
therefore expect that Algorithm \ref{alg:goppa} outperforms the GSA+KV, though
the asymptotic analysis we have performed here is too crude to say for certain.

\endgroup

\section{Conclusion}

In this article, we have reinvestigated the Wu list decoder of \cite{wu08}.
Originally formulated in tight integration with the Berlekamp-Massey algorithm,
we have shown how the extended Euclidean algorithm can be used instead,
enabling one to solve more general equations than the original Key Equation for
Generalised Reed-Solomon codes. 

At its core, the Wu list decoder solves a rational interpolation problem in a
manner mirroring the polynomial interpolation of the Guruswami-Sudan algorithm
(GSA). We have pointed out how this equation becomes the one of the GSA by a
change of variables, implying that their decoding radii and list sizes are the
same, as well as connecting the multiplicities. 

The most expensive part of solving the rational interpolation problem is the
construction of an interpolation polynomial. We have shown how to extend
methods used in the GSA for constructing this polynomial fast. The result is
that the Wu list decoder can be made to run in the same complexity as the
fastest variants of the GSA.

The decoupling of the Key Equation-solving and rational interpolation from
the actual decoding results in a short derivation of the list decoder for GRS
codes. Moreover, it makes it clear that the approach also can be used to extend
the Patterson decoder for binary Goppa codes, list decoding up to the binary
Johnson radius. Also here, a connection to the governing equation of the GSA
with the K\"otter-Vardy multiplicity assignment method is pointed out.

\section{Acknowledgements}
The authors are very grateful for the insightful comments and suggestions made
by the anonymous reviewers.

\bibliographystyle{IEEEtran}
\bibliography{../tex/bibtex}

\end{document}